\DeclareMathOperator*{\argmax}{arg\,max}
\newtheorem{theoreme}{Th\'{e}or\`{e}me}[section]
\newtheorem{lemme}[theoreme]{Lemma}
\newtheorem{proposition}[theoreme]{Proposition}
\newtheorem{remarque}[theoreme]{Remark}
\newtheorem{hypothese}{Assumption}[section]
\title{Modeling frequency distribution above a priority in presence of IBNR}
\author{Nicolas Baradel\footnote{Inria, CMAP, CNRS, \'{E}cole polytechnique, Institut Polytechnique de Paris, 91200 Palaiseau, {nicolas.baradel@polytechnique.edu}.}}
\begin{document}

\maketitle

\begin{abstract}

In reinsurance, Poisson and Negative binomial distributions are employed for modeling frequency. However, the incomplete data regarding reported incurred claims above a priority level presents challenges in estimation. This paper focuses on frequency estimation using Schnieper's framework \cite{schnieper1991separating} for claim numbering. We demonstrate that Schnieper's model is consistent with a Poisson distribution for the total number of claims above a priority at each year of development, providing a robust basis for parameter estimation. Additionally, we explain how to build an alternative assumption based on a Negative binomial distribution, which yields similar results. The study includes a bootstrap procedure to manage uncertainty in parameter estimation and a case study comparing assumptions and evaluating the impact of the bootstrap approach.

\end{abstract}
\section{Introduction}

In his profession, a reinsurer has to quote prices for excess of loss covers. Generally, the reinsurer estimates the frequency and severity distributions. For the frequency, the most common choices are the Poisson and Negative binomial distributions.  The Poisson distribution can be viewed as the natural distribution in an ideal world: when all claims are independent and occur with a non-random intensity, the distribution is Poisson. However, when there is some uncaptured randomness, the variance is greater than the mean. In the particular case where the intensity of a Poisson distribution follows a Gamma distribution, the overall distribution is known to be a Negative binomial one.

\medbreak

The data the reinsurer receives are often incomplete: only the reported incurred claims above a certain threshold, typically known as the priority. In the context of excess of loss, \cite[Schnieper]{schnieper1991separating} proposed a model that separates the IBNR (Incurred But Not Reported) into what he termed \textit{true IBNR}: newly reported claims, and \textit{IBNER (Incurred But Not Enough Reported)}: variation in estimated cost over time. 

\medbreak

Although \cite[Mack]{mack1993distribution} used some of the ideas from the Schnieper's method, it has not received much attention. Major contributions based on the Schnieper model include \cite{liu2009predictive} and \cite{liu2009bootstrap}. In the former, the author derives an estimator for the mean square error of the reserves. In the latter, they proposed a non-parametric bootstrap procedure to estimate the distribution of the reserve. Additionally, some valuable insights inspired by Schnieper’s model are found in \cite{ohlssonclaims} and \cite{ohlssonclaims2}. In the former, the author uses Schnieper’s approach to determine the implicit part of the IBNER in the Chain Ladder reserve. In the latter, they adapt the methodology in order to also separate the paid from the incurred claims.

\medbreak

Schnieper also addressed a special case: claim numbering above a priority. The frequency of claims exceeding the priority over time is divided into: claims that newly reach the priority and claims that fall below it. In particular, he proposed assuming a Poisson distribution for claims that reach the priority and a Binomial distribution for claims that drop below it.

\medbreak

In this paper, we focus on frequency estimation in presence of incomplete data, specifically the reported incurred above a priority, using Schnieper' framework for claim numbering. We show that the total number of claims in his model follows a Poisson distribution at each year of development. Consequently, this framework is consistent with a Poisson distribution for the total number of claims above a priority and provides a consistent framework for parameter estimation. We also propose an alternative assumption based on a Negative binomial distribution, which yields similar results. We show that the total number of claims also follows a Negative binomial distribution for each year of development and we provide an estimation procedure. Additionally, we address claim reserving by providing the distribution of ultimate claim numbers, conditioned on current incurred claims.

\medbreak

The paper is organized as follows. Section 2 presents Schnieper's general model, with a review of key estimators. Section 3 covers claim numbers above a priority. The first part deals with the Schnieper assumption, from which we derive additional results. Specifically, we obtain the distribution of the total number of claims for both purposes: quotation and reserving. The second part presents an alternative assumption under which we show that the total claim number above a priority follows a Negative binomial distribution. Section 4 describes a bootstrap procedure for each case, addressing uncertainty in parameter estimation. Finally, Section 5 provides a case study comparing assumptions and evaluating the impact of the bootstrap approach and its contribution to the different assumptions.

\section{The general model}\label{general}

The Schnieper model, with an aim of excess of loss cover, separates two different behaviors in the IBNR data:

\begin{itemize}
    \item The occurrence of newly reported claims, which are assumed to arise randomly based on the level of exposure ;
    \item The progression of previously reported claims, which is determined by the current known amounts.
\end{itemize}
The Schnieper's framework requires more summary statistics than the aggregated evolution of the incurred claims, which we introduce below, where $n \geq 1$ denotes the number of years, $1 \leq i \leq n$ represents the occurrence year, and $1 \leq j \leq n$ represents the development year:

\begin{itemize}
    \item The random variables $(N_{i,j})_{1 \leq i, j \leq n}$ represent the total amount of new excess claims, referring to claims that have not been recorded as excess claims in previous development years ;
    \item The random variables $(D_{i,j})_{1 \leq i, j \leq n}$ represent the decrease in the total claims amount between development years $j-1$ and $j$, concerning claims that were already known in development year $j-1$.
\end{itemize}

The $(D_{i,j})_{1 \leq i, j \leq n}$ can be negative in the event of an increase and, by construction, $D_{i, 1} = 0$ for all $1 \leq i \leq n$.

\bigskip

Given $(N_{i,j})_{1 \leq i, j \leq n}$ and $(D_{i,j})_{1 \leq i, j \leq n}$, the cumulative incurred data $(C_{i,j})_{1 \leq i, j \leq n}$ can be calculated using the following iterative process:

\begin{equation}\label{XND}
    \begin{aligned}
    &C_{i, 1} = N_{i, 1}, & & 1 \leq i \leq n \\
    &C_{i,j+1} = C_{i,j} + N_{i,j+1} - D_{i,j+1}, & & 1 \leq i \leq n, \ 1 \leq j \leq n-1
    \end{aligned}
\end{equation}
We also introduce non-negative exposures $(E_i)_{1 \leq i \leq n}$ that are assumed to be known and associated with the data mentioned above.
Finally, we introduce the following filtration:
    \[
        \mathcal{F}_{k} := \sigma(N_{i,j}, D_{i,j} \mid i + j \leq k + 1), \ \ k \geq 1.
    \]
The current available information is $\mathcal{F}_{n}$. In the context of Schnieper's general model, the following assumption is made:

    \begin{hypothese}
	\leavevmode
        \begin{itemize}
            \item[H1] The random variables $(N_{i_{1}, j}, D_{i_{1}, j})_{1 \leq j \leq n}$ and $(N_{i_{2}, j}, D_{i_{2}, j})_{1 \leq j \leq n}$ are independent for $i_{1} \not= i_{2}$. 
            \item[H2] For $1 \leq j \leq n$, there exists $\lambda_{j} \geq 0$ and for $1 \leq j \leq n - 1$, there exists $\delta_{j} \leq 1$ such that 
                \[
                \begin{aligned}
                    \mathbb{E}(N_{i, j} \mid \mathcal{F}_{i + j - 2}) &= \lambda_{j}E_{i}, \ \ &1 \leq i \leq n,\\
                    \mathbb{E}(D_{i, j+1} \mid \mathcal{F}_{i + j - 1}) &= \delta_{j}C_{i, j}, \ \ &1 \leq i \leq n.
                \end{aligned}
                \]
            \item[H3] For $1 \leq j \leq n - 1$, there exist $\sigma_{j}^{2} \geq 0$ and $\tau_{j}^{2} \geq 0$ such that
                \[
                \begin{aligned}
                    Var(N_{i,j} \mid \mathcal{F}_{i + j - 2}) &= \sigma_{j}^{2}E_{i}, \ \ &1 \leq i \leq n, \\
                    Var(D_{i, j+1} \mid \mathcal{F}_{i + j - 1}) &= \tau_{j}^{2}C_{i,j}, \ \ &1 \leq i \leq n.
                \end{aligned}
                \]
        \end{itemize}
    \end{hypothese}

The evolution of the $(D_{i,j})_{1 \leq i, j \leq n}$ follows the same process as in Mack’s model \cite{mack1993distribution}, using incurred claims as the exposure combined with a development factor. However, the new claims generated by the $(N_{i,j})_{1 \leq i, j \leq n}$ represent an additional additive component that depends on the exposure.

From the above assumption, Schnieper introduced the following estimators for the $\lambda$'s and the $\delta$'s.

            \begin{equation}\label{main_estimators}                   
            \begin{aligned}
            \widehat{\lambda}_{j} &:= \frac{\sum_{i = 1}^{n - j + 1}N_{i, j}}{\sum_{i = 1}^{n - j + 1}E_{i}}, \\
            \widehat{\delta}_{j} &:= \frac{\sum_{i = 1}^{n - j}D_{i, j+1}}{\sum_{i = 1}^{n - j}C_{i,j}}.
            \end{aligned}
                \end{equation}

They are obviously biasfree estimates of the $\lambda$'s and $\delta$'s respectively. Additionally, they are the best linear estimators in $\frac{N_{i,j}}{E_{i}}$ and $\frac{D_{i,j+1}}{C_{i,j}}$ respectively as a consequence of \textit{H3}.

Schnieper developed a way to estimate the expected value of $C_{n+1, n}$ and the pure premium for the following year, including the total reserve along with its associated mean square error based on estimators of the $\sigma$'s and $\tau$'s that he provides. We now present the model for claim number, which can be considered a specific instance of the broader model.

\section{The model for claim numbers}

Schnieper also dealt with a special case: the number of claims above a priority, which is the focus of this paper. From this point onward, for a given priority level, we define

    \begin{itemize}
        \item[\textbullet] The random variables $(N_{i, j})_{1 \leq i, j \leq n}$ represent the number of new excess claims pertaining to accident year $i$ in development year $j$ (were below the priority or not reported in development year $j-1$) ;
        \item[\textbullet] The random variables $(D_{i, j})_{1 \leq i, j \leq n}$ represent the number of claims that exceeded the priority in development year $j-1$ but have since decreased in cost to fall below the priority in development year $j$.
    \end{itemize}

In this context, the $(D_{i,j})$ are now non-negative and remain bounded by $(C_{i, j-1})$. Additionally, all data are integer. 

\bigskip

Schnieper proposed that the new claims, denoted as $(N_{i,j})$, follow a Poisson distribution, while the claims decreasing below the priority, represented by $(D_{i,j})$, follow a Binomial distribution. The reasoning is as follows: claims that rise above the priority (either new or previously below the threshold) occur independently at non random intensity, aligning with a Poisson distribution. Meanwhile, each claim above the priority has a certain probability of falling below the threshold each year, occurring independently and leading to a binomial distribution. The next subsection will provide a detailed explanation of these assumptions and additional results as stated in \cite{schnieper1991separating}, including the finding that $(C_{i,j})$ follows a Poisson distribution at each date.

\subsection{The Poisson assumption}

The following assumption corresponds to \cite[Assumptions $A_1'' - A_2''$]{schnieper1991separating}. We add \emph{H1'} to slighty reinforce \emph{H1}.

    \begin{hypothese}
	\leavevmode
        \begin{itemize}
            \item[H1'] The random variables $(N_{i_{1}, j}, D_{i_{1}, j})_{1 \leq j \leq n}$ and $(N_{i_{2}, j}, D_{i_{2}, j})_{1 \leq j \leq n}$ are independent for $i_{1} \not= i_{2}$, and the pairs $(N_{i, j}, D_{i, j})$ are independent for all $1 \leq i, j \leq n$.
            \item[H2'] For $1 \leq j \leq n$, there exists $\lambda_{j} \geq 0$ and for $1 \leq j \leq n - 1$, there exists $0 \leq \delta_{j} \leq 1$ such that
                \[
                \begin{aligned}
		N_{i,j} \mid \mathcal{F}_{i + j - 2} &\sim \mathcal{P}\left(\lambda_{j} E_{i}\right), \quad &1 \leq i \leq n,\\
		D_{i,j+1} \mid \mathcal{F}_{i + j - 1} &\sim \mathcal{B}(C_{i, j}, \delta_{j}), \quad &1 \leq i \leq n.\\
                \end{aligned}
	\]
        \end{itemize}
    \end{hypothese}

Note that \emph{H1'} implies \emph{H1}, and \emph{H2'} implies \emph{H2} and \emph{H3} from Section \ref{general}. \emph{H1'} adds independance between $N_{i,j}$ and $D_{i,j}$ on the same dates, while \emph{H2'} specifies the distributions. We assume now that the assumptions \textit{H1'} and \emph{H2'} hold. Given these, \cite{schnieper1991separating} showed that the $\widehat{\lambda}$'s and $\widehat{\delta}$'s defined in (\ref{main_estimators}) are also the maximum likelihood estimators and are efficient, with their variances given by the inverse of the Fisher information.

\medbreak

From what Schnieper stated in this framework, we shall show additional results. Specifically, under \emph{H1}' and \emph{H2'}, the $(C_{i,j})_{1 \leq i, j \leq n}$ follow a Poisson distribution. Before stating the result, recall first a classic lemma that will be essential.
    \begin{lemme}\label{poisson_binomiale}
        Let $N$ be a random variable with distribution $\mathcal{P}(\lambda)$ with $\lambda > 0$ and $D$ a random variable such that $D \mid N \sim \mathcal{B}(N, p)$ for $0 < p < 1$. Then
        
        \[
            N - D \sim \mathcal{P}(\lambda(1-p)).
        \]
    \end{lemme}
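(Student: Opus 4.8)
The plan is to compute the distribution of $N - D$ directly via the probability generating function (PGF), since the PGF of a Poisson and the conditional PGF of a binomial combine cleanly. First I would recall that for $N \sim \mathcal{P}(\lambda)$ the PGF is $\mathbb{E}(s^N) = e^{\lambda(s-1)}$, and that given $N$, the variable $D \mid N \sim \mathcal{B}(N,p)$, so $N - D \mid N \sim \mathcal{B}(N, 1-p)$, whose conditional PGF is $\mathbb{E}(s^{N-D} \mid N) = (p + (1-p)s)^N$. The key step is then to compute $\mathbb{E}(s^{N-D}) = \mathbb{E}\bigl(\mathbb{E}(s^{N-D} \mid N)\bigr) = \mathbb{E}\bigl((p+(1-p)s)^N\bigr)$ by plugging the argument $p + (1-p)s$ into the Poisson PGF.

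Carrying this out, I would get
\[
    \mathbb{E}(s^{N-D}) = \exp\bigl(\lambda(p + (1-p)s - 1)\bigr) = \exp\bigl(\lambda(1-p)(s - 1)\bigr),
\]
which I recognize as the PGF of a $\mathcal{P}(\lambda(1-p))$ distribution. Since a distribution on $\mathbb{N}$ is uniquely determined by its PGF on, say, $[0,1]$ (the PGF is real-analytic and its derivatives at $0$ recover the point masses), this identifies $N - D \sim \mathcal{P}(\lambda(1-p))$ and completes the proof.

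An alternative I could mention is a direct computation: for $k \geq 0$,
\[
    \mathbb{P}(N - D = k) = \sum_{m \geq 0} \mathbb{P}(N = k + m)\binom{k+m}{m} p^m (1-p)^k,
\]
and after substituting the Poisson mass function and simplifying the binomial coefficient, the sum over $m$ becomes $e^{\lambda p}$ times the appropriate factor, yielding $e^{-\lambda(1-p)}(\lambda(1-p))^k / k!$. I expect the PGF argument to be the cleaner route, so I would present that as the main proof; the only mild subtlety to flag is the edge behavior (the statement assumes $\lambda > 0$ and $0 < p < 1$, so $\lambda(1-p) > 0$ and no degenerate cases arise), which makes the identification of the limiting distribution immediate. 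There is no real obstacle here — the lemma is classical (thinning of a Poisson variable) — so the "hard part" is merely choosing the most transparent exposition, and the PGF computation is essentially a one-line substitution.
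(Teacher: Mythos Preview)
Your proof is correct. The paper, however, takes the direct-computation route you only sketch as an alternative: it fixes $k\in\mathbb{N}$, writes
\[
\mathbb{P}(N-D=k)=\sum_{n\geq k}\binom{n}{k}p^{n-k}(1-p)^{k}\,\frac{e^{-\lambda}\lambda^{n}}{n!},
\]
reindexes the sum, and recognizes $\sum_{n\geq 0}(p\lambda)^{n}/n!=e^{p\lambda}$ to obtain the Poisson mass at $k$. Your PGF argument is more concise and avoids any series manipulation, at the mild cost of invoking the fact that a PGF determines a distribution on $\mathbb{N}$; the paper's computation is entirely self-contained and elementary, never leaving the level of point probabilities. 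Both are standard textbook proofs of Poisson thinning, so the difference is purely expository.
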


        \begin{proof}
    For completeness, we provide the proof of this elementary result . Let $k \in \mathbb{N}$.
    
    \begin{align*}
        \mathbb{P}(N - D = k) &= \sum_{n \in \mathbb{N}}\mathbb{P}\left(\left\{D = n - k\right\} \mid \left\{N = n\right\}\right)\mathbb{P}\left(N = n\right) \\
        &= \sum_{n \geq k}\frac{n!}{k!(n-k)!}p^{n-k}(1-p)^{k}\frac{e^{-\lambda}\lambda^{n}}{n!} \\
        &= \frac{e^{-\lambda}\left[\lambda(1-p)\right]^{k}}{k!}\sum_{n \geq 0}\frac{(p\lambda)^{n}}{n!} \\
        &= \frac{e^{-\lambda(1-p)}\left[\lambda(1-p)\right]^{k}}{k!}.
    \end{align*}
    \end{proof}

             \begin{proposition}\label{loi_poisson}
        For all $1 \leq i,j \leq n$, we have
        \[
            C_{i,j} \sim \mathcal{P}\left(\lambda_{j}'E_i\right),
        \]
        with
        \[
            \lambda_{j}' := \sum_{k = 1}^{j}\lambda_{k}\left(\prod_{\ell = k}^{j-1}(1-\delta_{\ell})\right).
        \]
    \end{proposition}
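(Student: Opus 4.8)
I would prove the proposition by induction on the development year $j$, with the occurrence year $i$ fixed throughout. The two engines are Lemma \ref{poisson_binomiale} (Poisson thinning) and the fact that a sum of two independent Poisson variables is Poisson with the summed parameters.

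\emph{Base case $j=1$.} By \eqref{XND} we have $C_{i,1}=N_{i,1}$, and \emph{H2'} gives $N_{i,1}\mid\mathcal{F}_{i-1}\sim\mathcal{P}(\lambda_1 E_i)$. Since the parameter $\lambda_1 E_i$ is deterministic, the conditional law does not depend on $\mathcal{F}_{i-1}$, so $C_{i,1}\sim\mathcal{P}(\lambda_1 E_i)$ unconditionally; and $\lambda_1'=\lambda_1$ because the product over the empty index set equals $1$. This settles $j=1$.

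\emph{Inductive step.} Assume $C_{i,j}\sim\mathcal{P}(\lambda_j' E_i)$ and write, using \eqref{XND},
\[
C_{i,j+1}=\bigl(C_{i,j}-D_{i,j+1}\bigr)+N_{i,j+1}.
\]
I would first handle the bracketed term. Since $C_{i,j}$ is $\mathcal{F}_{i+j-1}$-measurable and, by \emph{H2'}, the conditional law of $D_{i,j+1}$ given $\mathcal{F}_{i+j-1}$ is $\mathcal{B}(C_{i,j},\delta_j)$ — a law depending on $\mathcal{F}_{i+j-1}$ only through $C_{i,j}$ — the tower property yields $D_{i,j+1}\mid C_{i,j}\sim\mathcal{B}(C_{i,j},\delta_j)$. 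Combined with the induction hypothesis, Lemma \ref{poisson_binomiale} gives $C_{i,j}-D_{i,j+1}\sim\mathcal{P}\bigl(\lambda_j' E_i(1-\delta_j)\bigr)$ (the degenerate cases $\delta_j\in\{0,1\}$ or $\lambda_j' E_i=0$, not covered by the lemma, are immediate by direct inspection). Next, \emph{H2'} gives $N_{i,j+1}\mid\mathcal{F}_{i+j-1}\sim\mathcal{P}(\lambda_{j+1}E_i)$ with a deterministic parameter, so $N_{i,j+1}\sim\mathcal{P}(\lambda_{j+1}E_i)$ and $N_{i,j+1}$ is independent of $\mathcal{F}_{i+j-1}$, hence of $C_{i,j}$; together with \emph{H1'} this should yield that $N_{i,j+1}$ is independent of the pair $(C_{i,j},D_{i,j+1})$, hence of $C_{i,j}-D_{i,j+1}$. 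Then $C_{i,j+1}$ is a sum of two independent Poisson variables, so $C_{i,j+1}\sim\mathcal{P}\bigl((\lambda_j'(1-\delta_j)+\lambda_{j+1})E_i\bigr)$. It remains to check that this parameter equals $\lambda_{j+1}' E_i$, a one-line manipulation: isolating the $k=j+1$ term in the definition of $\lambda_{j+1}'$ and factoring $(1-\delta_j)$ out of the remaining sum gives $\lambda_{j+1}'=(1-\delta_j)\lambda_j'+\lambda_{j+1}$, completing the induction.

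The step that needs the most care is the independence claim in the inductive step: marginal independence of $N_{i,j+1}$ from $C_{i,j}$ and from $D_{i,j+1}$ separately does not by itself give independence from the pair $(C_{i,j},D_{i,j+1})$, so one must argue at the level of the joint conditional law given $\mathcal{F}_{i+j-1}$ — namely that, under \emph{H1'}--\emph{H2'}, $N_{i,j+1}$ is conditionally independent of $D_{i,j+1}$ given $\mathcal{F}_{i+j-1}$, which then upgrades (since the conditional law of $N_{i,j+1}$ is constant in $\mathcal{F}_{i+j-1}$) to independence of $N_{i,j+1}$ from $\sigma(\mathcal{F}_{i+j-1},D_{i,j+1})$, and in particular from $C_{i,j}-D_{i,j+1}$. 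Everything else — the base case, the application of Lemma \ref{poisson_binomiale}, and the arithmetic on $\lambda_{j+1}'$ — is routine. As an alternative to invoking Lemma \ref{poisson_binomiale}, one could run the whole inductive step at once with probability generating functions, computing $\mathbb{E}[s^{C_{i,j+1}}]$ by conditioning on $\mathcal{F}_{i+j-1}$; this makes transparent that the only probabilistic input beyond \emph{H2'} is exactly the above independence.
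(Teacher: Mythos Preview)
Your proof is correct and follows essentially the same route as the paper: induction on $j$ with $i$ fixed, applying Lemma~\ref{poisson_binomiale} to $C_{i,j}-D_{i,j+1}$ and then adding the independent Poisson $N_{i,j+1}$ via \emph{H1'}. If anything, you are more careful than the paper on the independence step and on the recursion $\lambda_{j+1}'=(1-\delta_j)\lambda_j'+\lambda_{j+1}$, both of which the paper leaves implicit.
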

    
    \begin{proof}
    We prove the lemma by induction. Let $1 \leq i \leq n$ be fixed. By construction $C_{i,1} = N_{i,1} \sim \mathcal{P}(\lambda_{1}E_{i})$.
Assume as the induction hypothesis that $C_{i,j}$ follows a Poisson distribution with parameter $\lambda_{j}'E_i$. Recall the relation in (\ref{XND}): 
    \[
            C_{i,j+1} = C_{i,j} + N_{i,j+1} - D_{i,j+1}.
    \]
    Under \emph{H2'}, by Lemma \ref{poisson_binomiale},
    \[
         C_{i,j} - D_{i,j+1} \sim \mathcal{P}\left(\lambda_{j}'E_i(1-\delta_{j})\right),
    \]
    and finally, under \emph{H1'}:
    \[
        C_{i,j+1} \sim \mathcal{P}\left(\lambda_{j+1}E_{i} + \lambda_{j}'(1-\delta_{j})E_{i}\right) = \mathcal{P}\left(\lambda_{j+1}'E_i\right).
    \]
    \end{proof}

The above result provides the distribution of $C_{n+1, n}$ given the corresponding exposure. In practice, we are also interested in the conditional distribution $C_{i,j} \mid \mathcal{F}_n$ for $i+j > n+1$. Before presenting this result, we introduce an essential lemma.

\begin{lemme}\label{binomiale_binomiale}
    Let $N \in \mathbb{N}^{*}$. Let $D_{1} \sim \mathcal{B}(N, p_{1})$, and $D_{n+1} \mid \{\sum_{i=1}^{n}D_{i} = d\} \sim \mathcal{B}(N - d, p_{n+1})$ for $n \geq 1$. Then
        \[
            N - \sum_{i = 1}^{n}D_{i} \sim \mathcal{B}\left(N, \prod_{i = 1}^{n}(1-p_{i})\right).
        \]
\end{lemme}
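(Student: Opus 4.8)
The plan is to prove Lemma~\ref{binomiale_binomiale} by induction on $n$, mirroring the structure used in Proposition~\ref{loi_poisson} but with the binomial thinning lemma replaced by a direct telescoping argument. The base case $n=1$ is immediate: $N - D_1 \sim \mathcal{B}(N, 1-p_1)$ since if a fixed population of $N$ items each independently ``survives'' with probability $1-p_1$, the number of survivors is binomial. For the inductive step, assume $S_n := N - \sum_{i=1}^{n} D_i \sim \mathcal{B}(N, q_n)$ where $q_n := \prod_{i=1}^{n}(1-p_i)$, and consider $S_{n+1} = S_n - D_{n+1}$.

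The key observation is that the conditional law $D_{n+1} \mid \{\sum_{i=1}^n D_i = d\} \sim \mathcal{B}(N-d, p_{n+1})$ can be rewritten as $D_{n+1} \mid \{S_n = m\} \sim \mathcal{B}(m, p_{n+1})$, since $S_n = N - \sum_{i=1}^n D_i$ so conditioning on $\sum_{i=1}^n D_i = d$ is the same as conditioning on $S_n = N - d =: m$. Then I would compute, for $k \in \{0, \dots, N\}$,
\[
\mathbb{P}(S_{n+1} = k) = \sum_{m \geq k} \mathbb{P}(S_{n+1} = k \mid S_n = m)\,\mathbb{P}(S_n = m) = \sum_{m=k}^{N} \binom{m}{m-k} p_{n+1}^{m-k}(1-p_{n+1})^{k} \binom{N}{m} q_n^{m}(1-q_n)^{N-m},
\]
using $D_{n+1} = S_n - S_{n+1}$, so $\{S_{n+1}=k\} \cap \{S_n=m\}$ corresponds to $D_{n+1} = m-k$. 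The routine part is then the binomial sum identity: using $\binom{N}{m}\binom{m}{k} = \binom{N}{k}\binom{N-k}{m-k}$, one factors out $\binom{N}{k} q_n^k (1-p_{n+1})^k$ and the remaining sum over $m$ telescopes via the binomial theorem to $\big(q_n(1-p_{n+1})\big) + (1-q_n))^{N-k}$ wait --- more precisely it collapses to $(q_n p_{n+1} + (1-q_n))^{N-k}$, wait I need to recheck: the sum is $\sum_{m=k}^N \binom{N-k}{m-k} (q_n p_{n+1})^{m-k} (1-q_n)^{N-m} = (q_n p_{n+1} + 1 - q_n)^{N-k}$. This gives $\mathbb{P}(S_{n+1}=k) = \binom{N}{k} \big(q_n(1-p_{n+1})\big)^k \big(1 - q_n(1-p_{n+1})\big)^{N-k}$, which is exactly $\mathcal{B}(N, q_{n+1})$ since $q_{n+1} = q_n(1-p_{n+1})$.

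The main obstacle, such as it is, lies in handling the conditioning carefully: one must be slightly careful that the statement $D_{n+1} \mid \{\sum_{i=1}^n D_i = d\} \sim \mathcal{B}(N-d,p_{n+1})$ is only meaningful for $d \leq N$ (and the events with positive probability), and that $S_n$ only takes values in $\{0,\dots,N\}$, so all the sums above are finite and the index ranges are as written. Beyond that bookkeeping the proof is a clean induction plus the standard Vandermonde-type identity $\binom{N}{m}\binom{m}{k} = \binom{N}{k}\binom{N-k}{m-k}$ followed by the binomial theorem, entirely parallel to how Lemma~\ref{poisson_binomiale} feeds into Proposition~\ref{loi_poisson}. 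An alternative, even shorter route I would mention is the ``coin-flip'' interpretation: think of each of the $N$ items as undergoing a sequence of independent survival trials with probabilities $1-p_1, 1-p_2, \dots$ (the $p_{n+1}$-trial applied only to items that have survived the first $n$, which is exactly what the conditional binomial encodes), so an item survives all $n$ rounds with probability $\prod_{i=1}^n(1-p_i)$ independently across items, giving the result immediately; the computational induction above is the rigorous underpinning of this picture.
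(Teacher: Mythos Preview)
Your proof is correct and follows essentially the same route as the paper: induction on $n$, conditioning on the partial sum (the paper conditions on $d=\sum_{i\le n}D_i$ while you condition on $S_n=N-d$, which is a trivial change of variable), and then the binomial-theorem collapse of the remaining sum; the identity $\binom{N}{m}\binom{m}{k}=\binom{N}{k}\binom{N-k}{m-k}$ you invoke is exactly what the paper's factorial manipulation is doing implicitly. The probabilistic ``coin-flip'' interpretation you append is a nice addition not present in the paper, but the core computational argument is the same.
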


\begin{proof}
    We prove the lemma by induction. It is clear that $N - D_{1} \sim \mathcal{B}(N, 1-p_{1})$. Let $n \geq 1$, and assume by induction that
        \[
            N - \sum_{i = 1}^{n} D_{i} \sim \mathcal{B}\left(N, \prod_{i = 1}^{n}(1-p_{i})\right).
        \]
        Let $k \in \{0, \ldots, N\}$. For ease of notation, we introduce $1-\pi_{n} := \prod_{i = 1}^{n}(1-p_{i})$. It follows that
        \begin{align*}
            \mathbb{P}\left[N - \sum_{i=1}^{n+1}D_{i} = k\right] &= \sum_{d = 0}^{N}\mathbb{P}\left[N - \sum_{i=1}^{n+1}D_{i} = k \ \middle| \ \sum_{i=1}^{n}D_{i} = d\right]\mathbb{P}\left(\sum_{i=1}^{n}D_{i} = d\right) \\
            &= \sum_{d = 0}^{N}\mathbb{P}\left[D_{n+1} = N - d - k \ \middle| \ \sum_{i=1}^{n}D_{i} = d\right]\mathbb{P}\left(N - \sum_{i=1}^{n}D_{i} = N - d\right) \\
            &= \sum_{d = 0}^{N-k}\frac{(N-d)!}{k!(N-d-k)!}p_{n+1}^{N-d-k}(1-p_{n+1})^{k}\frac{N!}{d!(N-d)!}\pi_{n}^{d}(1-\pi_{n})^{N-d}
            \\
            &= \frac{N!}{k!(N-k)!}(1-\pi_{n+1})^{k}\sum_{d = 0}^{N-k}\frac{(N-k)!}{(N-d-k)!d!}p_{n+1}^{N-d-k}\pi_{n}^{d}(1-\pi_{n})^{N-d - k}            \\
            &= \frac{N!}{k!(N-k)!}(1-\pi_{n+1})^{k}\left(\pi_{n} + p_{n+1}(1-\pi_{n})\right)^{N-k}
            \\
            &= \frac{N!}{k!(N-k)!}(1-\pi_{n+1})^{k}\pi_{n+1}^{N-k}.
        \end{align*}
\end{proof}

    \begin{proposition}\label{loi_fn}
        For all $1 \leq i,j \leq n$ such that $i+j > n + 1$, we have
        \[
            C_{i,j} \mid \mathcal{F}_{n} \sim \mathcal{B}(C_{i,n-i+1}, \delta_{i,j}') + \mathcal{P}\left(\lambda_{i,j}'E_i\right),
        \]
        where the right-hand side should be interpreted as the sum of two independent random variables, and with
        \begin{align*}
            \delta_{i,j}' &:= \prod_{k = n-i+1}^{j-1}(1-\delta_{k}), \\
            \lambda_{i,j}' &:= \sum_{k = n-i+2}^{j}\lambda_{k}\left(\prod_{\ell = k}^{j-1}(1-\delta_{\ell})\right).
        \end{align*}
    \end{proposition}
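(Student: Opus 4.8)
The plan is an induction on the development year $j$, for a fixed occurrence year $i$, started at the last observed diagonal $j=n-i+1$. For that value $C_{i,n-i+1}$ is $\mathcal{F}_n$-measurable, so $C_{i,n-i+1}\mid\mathcal{F}_n$ is deterministic; reading the empty product as $\delta_{i,n-i+1}'=1$ and the empty sum as $\lambda_{i,n-i+1}'=0$, this is exactly $\mathcal{B}(C_{i,n-i+1},1)+\mathcal{P}(0)$, so the announced form holds (this value of $j$ lies just outside the range $i+j>n+1$, but it is the natural initialisation; the first case genuinely covered by the statement is $j=n-i+2$).

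For the inductive step I would use the recursion $C_{i,j+1}=(C_{i,j}-D_{i,j+1})+N_{i,j+1}$ from (\ref{XND}) and condition on $\mathcal{F}_{i+j-1}$, which contains $\mathcal{F}_n$ since $i+j-1\ge n$ on this range. Under \emph{H1'}--\emph{H2'}, given $\mathcal{F}_{i+j-1}$ the quantity $C_{i,j}$ is known while $D_{i,j+1}\sim\mathcal{B}(C_{i,j},\delta_j)$ and $N_{i,j+1}\sim\mathcal{P}(\lambda_{j+1}E_i)$ are independent. It is cleanest to run the induction at the level of the conditional probability generating function $G_j(s):=\mathbb{E}[\,s^{C_{i,j}}\mid\mathcal{F}_n\,]$. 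A short computation gives $\mathbb{E}[\,s^{C_{i,j+1}}\mid\mathcal{F}_{i+j-1}\,]=\big(\delta_j+(1-\delta_j)s\big)^{C_{i,j}}\exp(\lambda_{j+1}E_i(s-1))$, hence, taking conditional expectation given $\mathcal{F}_n$ and setting $t:=\delta_j+(1-\delta_j)s$ (so that $1-t=(1-\delta_j)(1-s)$), $G_{j+1}(s)=\exp(\lambda_{j+1}E_i(s-1))\,G_j(t)$. Substituting the induction hypothesis $G_j(s)=\big(1-\delta_{i,j}'(1-s)\big)^{C_{i,n-i+1}}\exp(\lambda_{i,j}'E_i(s-1))$ and using the two elementary identities $\delta_{i,j}'(1-\delta_j)=\delta_{i,j+1}'$ and $\lambda_{j+1}+(1-\delta_j)\lambda_{i,j}'=\lambda_{i,j+1}'$ (the latter after splitting off the $k=j+1$ term and factoring $1-\delta_j$) turns $G_{j+1}$ into $\big(1-\delta_{i,j+1}'(1-s)\big)^{C_{i,n-i+1}}\exp(\lambda_{i,j+1}'E_i(s-1))$, which is the generating function of the independent sum $\mathcal{B}(C_{i,n-i+1},\delta_{i,j+1}')+\mathcal{P}(\lambda_{i,j+1}'E_i)$. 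Uniqueness of the generating function then closes the induction.

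If one prefers an argument invoking Lemmas \ref{poisson_binomiale} and \ref{binomiale_binomiale} directly, the same induction can be phrased by splitting $C_{i,j}$ into the claims already present at development year $n-i+1$ and the claims introduced afterwards by the $N$'s: the binomial thinning performed by $D_{i,j+1}$ is, in law, a union of independent thinnings of the two groups, so the first group is a composition of binomial thinnings of $C_{i,n-i+1}$ and Lemma \ref{binomiale_binomiale} makes it $\mathcal{B}\big(C_{i,n-i+1},\prod_{k=n-i+1}^{j-1}(1-\delta_k)\big)$, while each Poisson batch born at development year $k$ is thinned in years $k+1,\dots,j$ and Lemma \ref{poisson_binomiale} (applied repeatedly) turns it into an independent $\mathcal{P}\big(\lambda_kE_i\prod_{\ell=k}^{j-1}(1-\delta_\ell)\big)$; superposing these batches, together with the fresh $N_{i,j+1}$, produces $\lambda_{i,j}'$. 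The delicate point in this route is the interaction between the conditioning on $\mathcal{F}_n$ and the decomposition: one must check that, given $\mathcal{F}_n$, the ``old'' and ``new'' components stay independent along the recursion and that $D_{i,j+1}$ thins $C_{i,j}$ without reference to which component a claim belongs to, which is exactly what the reinforced independence in \emph{H1'} (of $N_{i,j}$ and $D_{i,j}$ on the same date, and across occurrence years) is for. Everything else is bookkeeping of the factors $\prod(1-\delta_\ell)$ and of the index shift between $\lambda_{i,j}'$ and $\lambda_{i,j+1}'$. I would therefore present the generating-function version as the main argument: it keeps this algebra transparent and needs no restriction such as $0<\delta_j<1$.
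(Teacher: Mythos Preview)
Your argument is correct. The generating-function route is a genuinely different presentation from the paper's own proof. The paper argues by induction as you do, but at each step it splits $C_{i,j}$ in distribution into its binomial and Poisson components (as given by the induction hypothesis), then observes that the binomial thinning by $D_{i,j+1}$ acts independently on the two pieces; Lemma~\ref{binomiale_binomiale} (or rather the fact that a binomial thinned binomially is again binomial) updates the first component and Lemma~\ref{poisson_binomiale} updates the second, after which the fresh $N_{i,j+1}$ is added to the Poisson side. This is exactly the alternative you sketch in your last paragraph, and you are right that the delicate part is justifying that the thinning acts componentwise in law; the paper is terse on this point (it writes the decomposition $C_{i,j}=C_{i,n-i+1}+(C_{i,j}-C_{i,n-i+1})$ somewhat loosely). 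Your generating-function computation via the tower property $G_{j+1}(s)=\mathbb{E}\big[\mathbb{E}[s^{C_{i,j+1}}\mid\mathcal{F}_{i+j-1}]\mid\mathcal{F}_n\big]$ sidesteps this entirely: it never needs to couple the two components as random variables, works uniformly for boundary values of $\delta_j$, and makes the parameter recursions $\delta_{i,j}'\mapsto\delta_{i,j+1}'$ and $\lambda_{i,j}'\mapsto\lambda_{i,j+1}'$ fall out of a single substitution $t=\delta_j+(1-\delta_j)s$. The price is that the probabilistic picture (old claims surviving vs.\ new claims being thinned) is hidden; the paper's approach keeps that interpretation explicit.
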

    \begin{proof}
        We prove the lemma by induction. Let $1 \leq i \leq n$ be fixed. Since:
        \[
            C_{i,n-i+2} = C_{i,n-i+1} - D_{i,n-i+2} + N_{i,n-i+2},
        \]
        from Lemma \ref{binomiale_binomiale}, $C_{i,n-i+1} - D_{i,n-i+2} \mid \mathcal{F}_{n} \sim \mathcal{B}(C_{i,n-i+1}, 1-\delta_{i,n-i+1})$ and $N_{i,n-i+2} \mid \mathcal{F}_{n} \sim \mathcal{P}(\lambda_{n-i+2}E_{i})$.
Assume by induction that:
    \[C_{i,j} \mid \mathcal{F}_{n} \sim \mathcal{B}\left(C_{i,n-i+1}, \delta_{i,j}'\right) + \mathcal{P}\left(\lambda_{i,j}'E_{i}\right).
    \]
    Remark that
    \[
            C_{i,j+1} = C_{i,n-i+1} + (C_{i,j} - C_{i,n-i+1}) + N_{i,j+1} - D_{i,j+1}.
    \]
    and $D_{i,j+1} \sim \mathcal{B}(C_{i,j}, \delta_j) = \mathcal{B}(C_{i,n-i+1}, \delta_j) + \mathcal{B}(C_{i,j} - C_{i,n-i+1}, \delta_j)$, where the right-hand side should be interpreted as the sum of two independent random variables. \\ 
    Under \textit{H2'}, by Lemma \ref{binomiale_binomiale},
    \[
        C_{i,n-i+1} - \mathcal{B}(C_{i,n-i+1}, \delta_j) \mid \mathcal{F}_{n} \sim \mathcal{B}\left(C_{i,n-i+1}, \delta_{i,j+1}'\right),
    \]
    and
    \[
        (C_{i,j} - C_{i,n-i+1}) - \mathcal{B}(C_{i,j} - C_{i,n-i+1}, \delta_j) \mid \mathcal{F}_n \sim \mathcal{P}\left(\lambda_{i,j}'E_{i}(1-\delta_j\right))
    \]
    and finally, by Lemma \ref{poisson_binomiale}
    \[
         C_{i,j+1} \mid \mathcal{F}_{n} \sim \mathcal{B}\left(C_{i,n-i+1}, \delta_{i,j+1}'\right) + \mathcal{P}\left(\lambda_{i,j+1}'E_{i}\right).
    \]
    \end{proof}

The result indicates that, following the observation of $\mathcal{F}_{n}$, the distribution of an unobserved $C_{i,j}$ can be described as the sum of two components: current claims that exceed the priority threshold and are likely to stay above it, and new claims that may initially rise above the priority but might later fall below it.

If we are interested in estimating $\mathbb{E}(C_{i,n} \mid \mathcal{F}_n)$, we can use the estimator $\widehat{C}_{i,n} := \widehat{\lambda}_{i, n}'E_i + \widehat{\delta}_{i, n}'C_{i, n-i+1}$ where $(\widehat{\lambda}_{i, n}', \widehat{\delta}_{i, n}')$ are estimators for $(\lambda_{i, n}', \delta_{i, n}')$. The sequences $(\lambda_{k})$ and $(\delta_k)$ are estimated by $(\widehat{\lambda}_{k})$ and $(\widehat{\delta}_k)$, as defined in \eqref{main_estimators}.

\bigskip

In practice, the Poisson distribution is commonly used for modeling the number of claims due to its simplicity and the assumption of independent arrivals with non-random intensity across all claims. However, it is often observed that the empirical variance exceeds the empirical mean, suggesting that the claim data might not fully adhere to the assumptions of the Poisson distribution.

For instance, if the intensity parameter of claim arrivals for each policy follows a Gamma distribution, the total number of claims is known to follow a Negative binomial distribution. This distribution is favored because it remains straightforward to use, accommodates excess variance, and converges to the Poisson distribution when this variance diminishes.

In the following subsection, we propose using the Negative binomial distribution instead of the Poisson, demonstrating that, with an appropriate assumption, it can yield comparable results.

\subsection{The Negative binomial assumption}

In this section, we aim to establish a Negative binomial framework that yields results similar to those obtained from the previous Poisson framework.
Specifically, we define the Negative binomial distribution as follows: let $N\sim \mathcal{NB}(r, p)$ where $r > 0$ and $0 < p < 1$. The probability mass function of $N$ is given by
\begin{equation}\label{p_bn}
    \mathbb{P}(N = n) = \frac{\Gamma(r+n)}{n!\Gamma(r)}p^r(1-p)^n, \quad n \in \mathbb{N}.
\end{equation}
We then proceed by introducing a new assumption that serves as a replacement for \textit{H2'}. The change of this assumption has an impact only on the sequence of random variables $(N_{i,j})_{1 \leq i,j \leq n}$.

    \begin{hypothese}
	\leavevmode
        \begin{itemize}
            \item[H2"] For $1 \leq j \leq n$, there exists $r_{j} \geq 0$ and for $1 \leq j \leq n - 1$, there exists $0 \leq \delta_{j} \leq 1$ such that
	\[
		\begin{aligned}
		N_{i,j} \mid \mathcal{F}_{i + j - 2} &\sim \mathcal{NB}\left(r_{j}E_{i}, p_{j}\right) \text{ in which }p_{j+1} := \frac{p_{j}}{1-\delta_{j}(1-p_{j})}, \ \ &1 \leq i \leq n,\\
		D_{i,j+1} \mid \mathcal{F}_{i + j - 1} &\sim \mathcal{B}(C_{i, j}, \delta_{j}), \ \ &1 \leq i \leq n. 
		\end{aligned}
	\]
        \end{itemize}
    \end{hypothese}

Similarly, \textit{H2"} implies \textit{H2} and \textit{H3} from Section \ref{general}. The structure of the $p$'s may not initially appear clear or intuitive. However, the representation of these parameters will be clarified later. The free parameters are the $(r_{j})_{1 \leq j \leq n}$ (which replace the $\lambda$'s from \textit{H2'}), the $(\delta_{j})_{1 \leq j \leq n-1}$ and $p_1 \in ]0, 1[$. There is only one additional parameter, compared to \textit{H2'}. This extra parameter governs the additional variance due to the specific configuration of the family $(p_{j})_{1 \leq j \leq n}$.

\begin{remarque}
The $p$'s can be explicitly expressed in terms of $p_{1}$ and the $\delta$'s:
\[
    p_{j} = \frac{p_1}{1 - (1-p_1)\left[\sum_{k=1}^{j-1}\delta_{k}\prod_{\ell = 1}^{k-1}(1-\delta_{\ell})\right]}, \ 1 \leq j \leq n.
\]
\end{remarque}
The above remark can be verified through direct induction. Estimating the $r$'s, $\delta$'s, and $p_1$ cannot yield explicit maximum likelihood estimates. For $\widehat{\delta}$'s, we can use the same estimator as the one defined in the Poisson framework in (\ref{main_estimators}). When $p_1$ is known, to estimate the $r$'s, we can use a moment estimator based on the expected value and set:
    \begin{equation}\label{est_r}
        \widehat{r}_{j} := \widehat{\lambda}_j\frac{p_j}{1-p_j}, \ \ 1 \leq j \leq n.
    \end{equation}
Finally, the estimation of $p_1$ can be computed numerically using maximum likelihood methods:

\begin{equation}\label{eq_mle}
	\widehat{p}_{1} \in \argmax \sum_{j=1}^{n} \sum_{i=1}^{n-j+1} \log f(N_{i,j}, \widehat{r}_{j} E_i, p_j),
\end{equation}
in which $(n, r, p) \mapsto f(n, r, p)$ denotes the probability mass function of the Negative binomial distribution with parameter $(r, p)$, see \eqref{p_bn}.

\smallbreak

It remains to explain why we choose this form for the $p$'s. It is a consequence of the following result, in order to have a consistent form, as in the Poisson case. To establish the distribution of the $(C_{i, j})$, we begin with a classic lemma.

    \begin{lemme}\label{negative_binomiale}
        Let $N$ be a random variable with distribution $\mathcal{NB}(r, p)$ for $r > 0$ and $0 < p < 1$, and $D$ a random variable such that $D \mid N \sim \mathcal{B}(N, \delta)$ for $0 < \delta < 1$. Then
        
        \[
            N - D \sim \mathcal{NB}(r, p') \ \text{ with } \ p' := \frac{p}{1-\delta(1-p)}.
        \]
    \end{lemme}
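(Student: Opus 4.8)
The plan is to compute the probability generating function (pgf) of $N-D$ directly, since pgfs handle both the Negative binomial law of $N$ and the binomial thinning cleanly, and products of pgfs correspond to the conditioning structure here. Recall that if $N\sim\mathcal{NB}(r,p)$ with the convention in \eqref{p_bn}, then its pgf is $\mathbb{E}(s^{N}) = \bigl(p/(1-(1-p)s)\bigr)^{r}$ for $|s|$ small enough, and that if $D\mid N\sim\mathcal{B}(N,\delta)$ then $\mathbb{E}(s^{D}\mid N) = (1-\delta+\delta s)^{N}$. I would therefore write, using the tower property,
\[
\mathbb{E}\!\left(s^{\,N-D}\right) = \mathbb{E}\!\left(s^{N}\,\mathbb{E}(s^{-D}\mid N)\right) = \mathbb{E}\!\left(s^{N}\bigl(1-\delta+\delta s^{-1}\bigr)^{N}\right) = \mathbb{E}\!\left(\bigl(s(1-\delta)+\delta\bigr)^{N}\right),
\]
so that $\mathbb{E}(s^{\,N-D}) = G_N\!\bigl(\delta + (1-\delta)s\bigr)$ where $G_N$ is the pgf of $N$. (To be careful about the $s^{-1}$, one can equivalently expand $D = \sum_{m=1}^{N} B_m$ with i.i.d.\ Bernoulli$(\delta)$ variables and note $N-D = \sum_{m=1}^N (1-B_m)$, each $1-B_m$ being Bernoulli$(1-\delta)$, which gives the same substitution $s\mapsto \delta+(1-\delta)s$ without any negative power.)

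Next I would substitute into $G_N$:
\[
\mathbb{E}\!\left(s^{\,N-D}\right) = \left(\frac{p}{1-(1-p)\bigl(\delta+(1-\delta)s\bigr)}\right)^{\!r}.
\]
The remaining task is purely algebraic: simplify the denominator $1-(1-p)\delta - (1-p)(1-\delta)s$ and show it can be rewritten in the form $c\,(1 - (1-p')s)$ for the claimed $p' = p/(1-\delta(1-p))$, with the leading constant $c$ combining with the numerator $p$ to give exactly $p'$. Concretely, $1-(1-p)\delta = 1-\delta(1-p)$, so the denominator is $\bigl(1-\delta(1-p)\bigr) - (1-p)(1-\delta)s = \bigl(1-\delta(1-p)\bigr)\bigl(1 - \tfrac{(1-p)(1-\delta)}{1-\delta(1-p)}s\bigr)$; one then checks that $1 - p' = \tfrac{(1-p)(1-\delta)}{1-\delta(1-p)}$ and that $p/\bigl(1-\delta(1-p)\bigr) = p'$. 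Plugging back yields $\mathbb{E}(s^{\,N-D}) = \bigl(p'/(1-(1-p')s)\bigr)^{r}$, which is the pgf of $\mathcal{NB}(r,p')$. Since the pgf characterizes the distribution of an $\mathbb{N}$-valued random variable, this proves $N-D\sim\mathcal{NB}(r,p')$.

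The only real subtlety is the handling of convergence/formal manipulation with the $s^{-1}$ term and checking that $0<p'<1$ so that the result is a bona fide Negative binomial law; both are straightforward — $p' \in (0,1)$ follows from $0<p<1$ and $0<\delta<1$ since $1-\delta(1-p)\in(p,1)$ — and I would dispatch the $s^{-1}$ issue via the Bernoulli-sum rewriting noted above rather than worrying about Laurent expansions. Alternatively, and perhaps more in the spirit of the preceding lemmas in the paper, one can give a direct convolution proof: write $\mathbb{P}(N-D=k) = \sum_{n\geq k}\binom{n}{k}\delta^{\,n-k}(1-\delta)^{k}\,\frac{\Gamma(r+n)}{n!\,\Gamma(r)}p^{r}(1-p)^{n}$, reindex with $m=n-k$, extract $\frac{\Gamma(r+k)}{k!\,\Gamma(r)}$, and recognize the residual sum $\sum_{m\geq 0}\frac{\Gamma(r+k+m)}{m!\,\Gamma(r+k)}\bigl(\delta(1-p)\bigr)^{m} = \bigl(1-\delta(1-p)\bigr)^{-(r+k)}$ as a negative-binomial series; collecting powers then reproduces $\frac{\Gamma(r+k)}{k!\,\Gamma(r)}(p')^{r}(1-p')^{k}$. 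I would present the pgf argument as the main proof for brevity and mention the convolution computation as the elementary alternative matching Lemma \ref{poisson_binomiale}.
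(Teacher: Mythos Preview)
Your proof is correct. The paper takes precisely the direct convolution route that you sketch as your alternative at the end: it writes $\mathbb{P}(N-D=k)$ as a sum over $n\ge k$, reindexes, and recognises the residual series $\sum_{m\ge 0}\frac{\Gamma(r+k+m)}{m!\,\Gamma(r+k)}[\delta(1-p)]^{m}=(1-\delta(1-p))^{-(r+k)}$ to collect the pmf of $\mathcal{NB}(r,p')$. Your primary argument via probability generating functions is a genuinely different and somewhat slicker route: observing that binomial thinning acts on the pgf by the affine substitution $s\mapsto \delta+(1-\delta)s$ and then composing with the closed-form pgf $G_N(s)=\bigl(p/(1-(1-p)s)\bigr)^{r}$ reduces the lemma to a two-line algebraic identity. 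What the pgf argument buys is brevity and a transparent structural explanation (the negative-binomial family with fixed $r$ is closed under binomial thinning because affine substitutions preserve the form of $G_N$); what the paper's convolution argument buys is a fully elementary computation that avoids generating-function machinery and mirrors the proof style of Lemma~\ref{poisson_binomiale}. Your handling of the $s^{-1}$ issue via the Bernoulli-sum rewriting $N-D=\sum_{m=1}^{N}(1-B_m)$ is the clean way to do it, and your check that $p'\in(0,1)$ is a small nicety the paper omits.
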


\begin{proof}
Let $k \in \mathbb{N}$.
    \begin{align*}
        \mathbb{P}(N - D = k) &= \sum_{n \in \mathbb{N}}\mathbb{P}\left(\left\{D = n - k\right\} \mid \left\{N = n\right\}\right)\mathbb{P}\left(N = n\right) \\
        &= \sum_{n \geq k}\frac{n!}{k!(n-k)!}\delta^{n-k}(1-\delta)^{k}\frac{\Gamma(r+n)}{n!\Gamma(r)}p^r(1-p)^n \\
        &= \frac{(1-\delta)^k}{k!}p^r\sum_{n \geq 0}\frac{\delta^{n}}{n!}\frac{\Gamma(r+n+k)}{\Gamma(r)}(1-p)^{n+k} \\
        &= \frac{[(1-\delta)(1-p)]^k}{k!}p^r\sum_{n \geq 0}\frac{\Gamma(r+n+k)}{n!\Gamma(r)}[\delta(1-p)]^{n} \\
        &= \frac{\Gamma(r+k)}{k!\Gamma(r)}\left[\frac{(1-\delta)(1-p)}{1-\delta(1-p)}\right]^k\left[\frac{p}{1-\delta(1-p)}\right]^{r}.
        \end{align*}
The final line is obtained by noting that $\sum_{n \geq 0}\frac{\Gamma(r+k+n)}{n!\Gamma(r+k)}[1-\delta(1-p)]^{r+k}[\delta(1-p)]^{n} = 1$.
\end{proof}

             \begin{proposition}\label{loi_nb}
        For all $1 \leq i,j \leq n$, we have
        \[
            C_{i,j} \sim \mathcal{NB}\left(r_{j}'E_i, p_j\right),
        \]
        with
        \[
            r_{j}' := \sum_{k=1}^{j}r_k.
        \]
    \end{proposition}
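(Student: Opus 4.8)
The plan is to mirror the proof of Proposition~\ref{loi_poisson}, replacing Lemma~\ref{poisson_binomiale} by Lemma~\ref{negative_binomiale} and using the stability of the Negative binomial family under convolution when the second parameter is shared. I would argue by induction on $j$, with $i$ fixed. For the base case $j=1$: by construction $C_{i,1}=N_{i,1}$, and \emph{H2"} gives $N_{i,1}\sim\mathcal{NB}(r_1 E_i,p_1)$; since $r_1'=r_1$ this is exactly the claim.

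For the induction step, assume $C_{i,j}\sim\mathcal{NB}(r_j' E_i,p_j)$ and use \eqref{XND} in the form $C_{i,j+1}=(C_{i,j}-D_{i,j+1})+N_{i,j+1}$. By \emph{H2"}, $D_{i,j+1}\mid\mathcal{F}_{i+j-1}\sim\mathcal{B}(C_{i,j},\delta_j)$, so in particular $D_{i,j+1}\mid C_{i,j}\sim\mathcal{B}(C_{i,j},\delta_j)$; applying Lemma~\ref{negative_binomiale} with $N:=C_{i,j}$ and $D:=D_{i,j+1}$ yields
\[
C_{i,j}-D_{i,j+1}\sim\mathcal{NB}\!\left(r_j' E_i,\ \frac{p_j}{1-\delta_j(1-p_j)}\right)=\mathcal{NB}(r_j' E_i,p_{j+1}),
\]
where the last equality is precisely the definition of $p_{j+1}$ in \emph{H2"} --- this is the reason for that particular form of the $p$'s, since it makes the second parameter of the Negative binomial update deterministically and identically to that of $N_{i,j+1}$. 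Next, \emph{H2"} gives $N_{i,j+1}\mid\mathcal{F}_{i+j-1}\sim\mathcal{NB}(r_{j+1} E_i,p_{j+1})$, and \emph{H1'} makes $N_{i,j+1}$ independent of the pair $(C_{i,j},D_{i,j+1})$, hence of $C_{i,j}-D_{i,j+1}$. Using the convolution identity $\mathcal{NB}(a,p)*\mathcal{NB}(b,p)=\mathcal{NB}(a+b,p)$ (immediate from the probability generating function $s\mapsto p^a/(1-(1-p)s)^a$, or from \eqref{p_bn} via a Vandermonde-type identity) I conclude
\[
C_{i,j+1}\sim\mathcal{NB}\!\left((r_j'+r_{j+1}) E_i,\ p_{j+1}\right)=\mathcal{NB}(r_{j+1}' E_i,p_{j+1}),
\]
which closes the induction.

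The only genuine subtlety --- and the point to be careful about --- is the boundary behaviour of the parameters. One should first verify that $p_j\in(0,1)$ for every $j$: since $0\le\delta_j\le1$ and $p_j\in(0,1)$, we have $\delta_j(1-p_j)<1$, so $1-\delta_j(1-p_j)>0$ and $p_{j+1}=p_j/(1-\delta_j(1-p_j))\ge p_j>0$; moreover $1-\delta_j\ge0$ forces $p_j\le 1-\delta_j(1-p_j)$, hence $p_{j+1}\le1$. Second, Lemma~\ref{negative_binomiale} is stated for $0<\delta<1$, so the degenerate cases $\delta_j\in\{0,1\}$ must be treated separately: for $\delta_j=0$ one has $D_{i,j+1}=0$ a.s.\ and $p_{j+1}=p_j$, while for $\delta_j=1$ one has $C_{i,j}-D_{i,j+1}=0$ and $p_{j+1}=1$, so the statement persists with the convention that $\mathcal{NB}(r,1)$ is the point mass at $0$; alternatively, one may note that both sides are continuous in $\delta_j$ and pass to the limit. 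Everything else is a routine transcription of the Poisson argument.
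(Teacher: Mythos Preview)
Your proof is correct and follows essentially the same approach as the paper's: induction on $j$, the base case from $C_{i,1}=N_{i,1}$, and the induction step via Lemma~\ref{negative_binomiale} together with the convolution closure of the Negative binomial family in the first parameter. The paper's proof is terser and does not explicitly justify the convolution identity or discuss the degenerate cases $\delta_j\in\{0,1\}$; your treatment of these points is a welcome addition.
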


        \begin{proof}
    We prove the lemma by induction. Let $1 \leq i \leq n$ be fixed. By construction $C_{i,1} = N_{i,1} \sim \mathcal{NB}(r_{1}E_{i}, p_1)$.
Assume by induction that $C_{i,j}$ follows a Negative binomial distribution with parameter $(r_{j}'E_i, p_j)$. Recall that 
    \[
            C_{i,j+1} = C_{i,j} + N_{i,j+1} - D_{i,j+1}.
    \]
    Under \textit{H2''}, by Lemma \ref{negative_binomiale},
    \[
         C_{i,j} - D_{i,j+1} \sim \mathcal{NB}\left(r_{j}'E_{i}, p_{j+1}\right),
    \]
    and finally:
    \[
        C_{i,j+1} \sim \mathcal{NB}\left(r_{j+1}'E_{i}, p_{j+1}\right).
    \]
    \end{proof}

The form of the $p$'s can now be understood. Assuming new claims above the priority threshold follow a Negative binomial distribution and that some claims may later fall below the priority, we aim to maintain consistency at any point in time with the Negative binomial distribution. Consequently, this leads to the specific form of the $p$'s.

\smallbreak

The $p$'s are increasing, leading to a smaller excess of variance over time. Notably, the likelihood of claims dropping below the priority with probabilities $\delta$'s influences the Negative binomial distribution, including claims not yet reported in future development years. And more likely the claim are droping below the thereshold, faster decreases the excess of variance for the new claims.

\smallbreak

In particular, for the extremal cases, if $\delta_{j} = 0$, meaning that no claims drops below the priority, the excess of variance does not reduce. Conversely, when $\delta_{j} \rightarrow 1$, meaning all claims drop below the priority, the excess variance vanishes.

\smallbreak

The preceding result provides the distribution of $C_{n+1, n}$ given the related exposure. Additionally, we may be interested in the distribution of $C_{i, j} \mid \mathcal{F}_n$ for $i + j > n + 1$, as described in the following proposition.

\begin{proposition}
    \label{loi_fn2}
        For all $1 \leq i,j \leq n$ such that $i+j > n + 1$, we have
        \[
            C_{i,j} \mid \mathcal{F}_{n} \sim \mathcal{B}(C_{i,n-i+1}, \delta_{i,j}') + \mathcal{NB}\left(r_{j}'E_i, p_j\right),
        \]
        where the right-hand side should be interpreted as the sum of two independent random variables, and with
        \begin{align*}
            \delta_{i,j}' &:= \prod_{k = n-i+1}^{j-1}(1-\delta_{k}), \\
            r_{j}' &:= \sum_{k = n-i+2}^{j}r_k.
        \end{align*}
\end{proposition}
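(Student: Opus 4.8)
The plan is to rerun the induction used in the proof of Proposition~\ref{loi_fn}, replacing the Poisson ingredients by their Negative binomial analogues: Lemma~\ref{poisson_binomiale} is replaced by Lemma~\ref{negative_binomiale}, and the stability of the Poisson family under convolution is replaced by the (elementary) stability of the Negative binomial family under convolution at a fixed second parameter, namely that the sum of two independent $\mathcal{NB}(r_1,p)$ and $\mathcal{NB}(r_2,p)$ is $\mathcal{NB}(r_1+r_2,p)$ --- an identity already used implicitly in the proof of Proposition~\ref{loi_nb}. We fix $i$ and induct on $j$, starting at $j=n-i+2$, the least index with $i+j>n+1$. For the base case, write $C_{i,n-i+2}=C_{i,n-i+1}-D_{i,n-i+2}+N_{i,n-i+2}$; conditionally on $\mathcal{F}_n$ the value $C_{i,n-i+1}$ is known, so \emph{H2"} gives $D_{i,n-i+2}\mid\mathcal{F}_n\sim\mathcal{B}(C_{i,n-i+1},\delta_{n-i+1})$, hence $C_{i,n-i+1}-D_{i,n-i+2}\mid\mathcal{F}_n\sim\mathcal{B}(C_{i,n-i+1},1-\delta_{n-i+1})=\mathcal{B}(C_{i,n-i+1},\delta_{i,n-i+2}')$, while $N_{i,n-i+2}\mid\mathcal{F}_n\sim\mathcal{NB}(r_{n-i+2}E_i,p_{n-i+2})=\mathcal{NB}(r_{n-i+2}'E_i,p_{n-i+2})$; these are independent by \emph{H1'}, so $C_{i,n-i+2}\mid\mathcal{F}_n$ has the stated law.

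For the inductive step it is convenient to keep, conditionally on $\mathcal{F}_n$, the explicit decomposition $C_{i,j}=B_j+M_j$, where $B_j$ counts the claims already above the priority at development $n-i+1$ that are still above it at $j$ (so $B_j\mid\mathcal{F}_n\sim\mathcal{B}(C_{i,n-i+1},\delta_{i,j}')$) and $M_j$ counts the claims first reaching the priority in development years $n-i+2,\dots,j$ that are still above it at $j$ (so $M_j\mid\mathcal{F}_n\sim\mathcal{NB}(r_j'E_i,p_j)$), with $B_j$ and $M_j$ independent given $\mathcal{F}_n$; the induction hypothesis is exactly this joint statement. Since $D_{i,j+1}\mid\mathcal{F}_{i+j-1}\sim\mathcal{B}(C_{i,j},\delta_j)$ and a $\mathcal{B}(a+b,\delta)$ variable is a sum of independent $\mathcal{B}(a,\delta)$ and $\mathcal{B}(b,\delta)$ variables, split $D_{i,j+1}=D_j^B+D_j^M$ along $B_j$ and $M_j$, and set $B_{j+1}:=B_j-D_j^B$, $M_{j+1}:=M_j-D_j^M+N_{i,j+1}$, so that $C_{i,j+1}=B_{j+1}+M_{j+1}$. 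On the ``old'' side, $B_{j+1}$ results from composing the binomial thinnings of $C_{i,n-i+1}$ with probabilities $\delta_{n-i+1},\dots,\delta_j$, so Lemma~\ref{binomiale_binomiale} gives $B_{j+1}\mid\mathcal{F}_n\sim\mathcal{B}(C_{i,n-i+1},\prod_{k=n-i+1}^{j}(1-\delta_k))=\mathcal{B}(C_{i,n-i+1},\delta_{i,j+1}')$. On the ``new'' side, Lemma~\ref{negative_binomiale} applied to $M_j-D_j^M$ yields $\mathcal{NB}(r_j'E_i,\frac{p_j}{1-\delta_j(1-p_j)})=\mathcal{NB}(r_j'E_i,p_{j+1})$ by the definition of $p_{j+1}$ in \emph{H2"}; adding the independent $N_{i,j+1}\mid\mathcal{F}_n\sim\mathcal{NB}(r_{j+1}E_i,p_{j+1})$ and using the Negative binomial convolution identity gives $M_{j+1}\mid\mathcal{F}_n\sim\mathcal{NB}((r_j'+r_{j+1})E_i,p_{j+1})=\mathcal{NB}(r_{j+1}'E_i,p_{j+1})$. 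Finally $B_{j+1}$ and $M_{j+1}$ remain independent given $\mathcal{F}_n$ because, by \emph{H1'}, the old-claim track (together with its thinnings $D_j^B$) is independent of the new-claim track (the fresh claims $N_{i,j+1}$ together with their thinnings $D_j^M$); summing the two tracks closes the induction.

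The heavy lifting here is bookkeeping, not probability: every distributional move reduces to Lemma~\ref{binomiale_binomiale} or Lemma~\ref{negative_binomiale}. The main obstacle is the opening move of the inductive step --- splitting $C_{i,j}$ and, simultaneously, the thinning $D_{i,j+1}$ into an ``old'' binomial part and a ``new'' Negative binomial part, and verifying, using both clauses of \emph{H1'} (independence across occurrence years and of $N_{i,j}$ from $D_{i,j}$ on the same date) together with \emph{H2"}, that these two parts stay conditionally independent given $\mathcal{F}_n$, exactly as one implicitly does in the proof of Proposition~\ref{loi_fn}. The second point to watch is the parameter matching, i.e.\ checking that the outputs $\prod_{k=n-i+1}^{j}(1-\delta_k)$, $r_j'+r_{j+1}$ and $\frac{p_j}{1-\delta_j(1-p_j)}$ of the two lemmas coincide with $\delta_{i,j+1}'$, $r_{j+1}'$ and $p_{j+1}$ as defined in the statement and in \emph{H2"}. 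It is also worth flagging the (standard) Negative binomial convolution identity, which the paper has so far only invoked implicitly, and noting that $r_j'$ here denotes $\sum_{k=n-i+2}^{j}r_k$, which is not the quantity of the same name in Proposition~\ref{loi_nb}.
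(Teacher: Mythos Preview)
Your proof is correct and follows exactly the approach the paper intends: the paper's own proof is the single sentence ``same reasoning as in Proposition~\ref{loi_fn}, with Lemma~\ref{negative_binomiale} replacing Lemma~\ref{poisson_binomiale},'' and you have faithfully carried this out, with a cleaner explicit $B_j+M_j$ split in place of the paper's $C_{i,n-i+1}+(C_{i,j}-C_{i,n-i+1})$ bookkeeping. Your additional remarks on the Negative binomial convolution identity and the overloading of the symbol $r_j'$ are accurate and worth noting.
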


\begin{proof}
The proof follows the same reasoning as in Proposition \ref{loi_fn}, with the key difference being the application of Lemma \ref{negative_binomiale} in place of Lemma \ref{poisson_binomiale}.
\end{proof}

If we are interested in estimating $\mathbb{E}(C_{i,n} \mid \mathcal{F}_n)$, using \eqref{est_r} for the estimators of the $r$'s leads to the same estimator as the one suggested for Proposition \ref{loi_fn}.

\section{Bootstrap methodology}

In \cite{liu2009bootstrap}, the author discusses a bootstrap methodology for the general Schnieper model to resimulate the $\lambda$'s and $\delta$'s, accounting for uncertainty in the parameters. They also simulate a Gaussian random variable to integrate the internal randomness of the process for each development stage. This follows the main ideas of the non-parametric bootstrap as summarized in \cite{england2006predictive}.

\medbreak

Here, we present a distinct approach that utilizes the specific framework of claim numbers and proposes a comprehensive parametric bootstrap methodology without computing residuals or making any additional assumption.

\subsection{The Poisson case}

Let $M \in \mathbb{N}^*$ be the total number of bootstrap simulations to be performed. To account for the inherent randomness of the $\lambda$'s and $\delta$'s, for $1 \leq m \leq M$. To do it efficiently, we shall use the following lemma.

\begin{lemme}\label{bootstrap_param_poisson}
        Under H1 and H2', we have:
        \[
	\begin{aligned}
            \widehat{\lambda}_{j} \mid \mathcal{B}_{j-1} &\sim \frac{\mathcal{P}\left(\lambda_{j}\sum_{i=1}^{n-j+1}E_{i}\right)}{\sum_{i=1}^{n-j+1}E_{i}}, \ \ &1 \leq j \leq n, \\ && \\
            \widehat{\delta}_{j} \mid \mathcal{B}_{j} &\sim \frac{\mathcal{B}\left(\sum_{i=1}^{n-j}C_{i,j}, \delta_{j}\right)}{\sum_{i=1}^{n-j}C_{i,j}}, \ \ &1 \leq j \leq n-1,
        \end{aligned}
	\]
where $\mathcal{B}_{k} := \sigma(N_{i, j}, D_{i, j} \mid i+j \leq n+1, j \leq k)$.
\end{lemme}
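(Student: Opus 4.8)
The plan is to establish each of the two conditional laws directly from the defining distributional assumptions in \emph{H1'} and \emph{H2'}, exploiting the fact that $\mathcal{B}_{j-1}$ (respectively $\mathcal{B}_j$) contains exactly the data needed to make the relevant conditional distributions explicit while keeping the summands independent. First I would treat $\widehat{\lambda}_j$. By definition $\widehat{\lambda}_j = \left(\sum_{i=1}^{n-j+1} N_{i,j}\right)/\left(\sum_{i=1}^{n-j+1} E_i\right)$, so it suffices to identify the conditional law of the numerator $\sum_{i=1}^{n-j+1} N_{i,j}$ given $\mathcal{B}_{j-1}$. For each $1 \le i \le n-j+1$ we have $i+j-2 \le n-1$, so $\mathcal{F}_{i+j-2} \subseteq \mathcal{B}_{j-1}$; hence \emph{H2'} gives $N_{i,j} \mid \mathcal{B}_{j-1} \sim \mathcal{P}(\lambda_j E_i)$ once I check that conditioning on the larger $\sigma$-algebra $\mathcal{B}_{j-1}$ does not change the law. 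This is where the independence hypotheses do the work: $\mathcal{B}_{j-1}$ is generated by the $N_{i',j'}, D_{i',j'}$ with $j' \le j-1$ together with the diagonal structure $i'+j' \le n+1$, and by \emph{H1'} the pair $(N_{i,j}, D_{i,j})$ is independent of all pairs on other accident years and of $D_{i,j}$ itself, so $N_{i,j}$ is independent of everything generating $\mathcal{B}_{j-1}$ beyond what $\mathcal{F}_{i+j-2}$ already captures; combined with the tower property this yields $N_{i,j} \mid \mathcal{B}_{j-1} \sim \mathcal{P}(\lambda_j E_i)$. Then, again by \emph{H1'} and the same argument, the variables $(N_{i,j})_{1 \le i \le n-j+1}$ are conditionally independent given $\mathcal{B}_{j-1}$, so their sum is $\mathcal{P}\!\left(\lambda_j \sum_{i=1}^{n-j+1} E_i\right)$ by the standard additivity of the Poisson family, and dividing by the deterministic constant $\sum E_i$ gives the claimed law.

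For $\widehat{\delta}_j = \left(\sum_{i=1}^{n-j} D_{i,j+1}\right)/\left(\sum_{i=1}^{n-j} C_{i,j}\right)$ the structure is analogous but with one extra subtlety: the denominator $\sum_{i=1}^{n-j} C_{i,j}$ is not deterministic, it is a random variable — however it is $\mathcal{B}_j$-measurable, since each $C_{i,j}$ is built via the recursion (\ref{XND}) from $N_{i',j'}, D_{i',j'}$ with $j' \le j$ and $i' + j' \le i + j \le n$, all of which lie in $\mathcal{B}_j$. So conditionally on $\mathcal{B}_j$ the denominator is a constant and I only need the conditional law of the numerator. For each $1 \le i \le n-j$ we have $i + j - 1 \le n-1$, hence $\mathcal{F}_{i+j-1} \subseteq \mathcal{B}_j$, and \emph{H2'} gives $D_{i,j+1} \mid \mathcal{F}_{i+j-1} \sim \mathcal{B}(C_{i,j}, \delta_j)$; the independence in \emph{H1'} again upgrades this to $D_{i,j+1} \mid \mathcal{B}_j \sim \mathcal{B}(C_{i,j}, \delta_j)$ with the $(D_{i,j+1})_{1 \le i \le n-j}$ conditionally independent given $\mathcal{B}_j$, because $\mathcal{B}_j$ adds only information on other accident years and on development years $\le j$, all independent of $D_{i,j+1}$ beyond the value of $C_{i,j}$ which is retained. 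The additivity of the binomial family at fixed success probability then gives $\sum_{i=1}^{n-j} D_{i,j+1} \mid \mathcal{B}_j \sim \mathcal{B}\!\left(\sum_{i=1}^{n-j} C_{i,j}, \delta_j\right)$, and dividing by the $\mathcal{B}_j$-measurable constant $\sum_{i=1}^{n-j} C_{i,j}$ finishes the argument.

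The main obstacle — and the only part that is not bookkeeping — is justifying the upgrade from conditioning on $\mathcal{F}_{i+j-2}$ (resp. $\mathcal{F}_{i+j-1}$) to conditioning on the coarser-in-development but full-diagonal $\sigma$-algebra $\mathcal{B}_{j-1}$ (resp. $\mathcal{B}_j$), and the conditional independence of the summands across $i$. The clean way to handle this is the following lemma-style observation: if $X$ is independent of a $\sigma$-algebra $\mathcal{H}$ given a sub-$\sigma$-algebra $\mathcal{G} \subseteq \mathcal{H}$ — meaning $\mathcal{L}(X \mid \mathcal{H}) = \mathcal{L}(X \mid \mathcal{G})$ — then one may freely replace $\mathcal{G}$ by $\mathcal{H}$ in the conditional law. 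Here $\mathcal{H} = \mathcal{B}_{j-1}$, $\mathcal{G} = \mathcal{F}_{i+j-2}$, $X = N_{i,j}$: by \emph{H1'} the pair $(N_{i,j}, D_{i,j})$ is independent of the family $\{(N_{i',j'}, D_{i',j'}) : i' \ne i\} \cup \{(N_{i,j'}, D_{i,j'}) : j' \ne j\}$, and $\mathcal{B}_{j-1} = \mathcal{F}_{i+j-2} \vee \sigma(\text{that family restricted to the relevant index set})$; since \emph{H2'} already asserts $\mathcal{L}(N_{i,j}\mid\mathcal{F}_{i+j-2})=\mathcal{P}(\lambda_j E_i)$ does not depend on the $\mathcal{F}_{i+j-2}$-data, and the remaining generators are independent of $N_{i,j}$, the conditional law is unchanged. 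Iterating this over $i=1,\dots,n-j+1$ (peeling off one $N_{i,j}$ at a time) simultaneously delivers conditional independence of the summands and the constancy of the per-term parameter. I would state this replacement principle once as a short remark and then apply it mechanically in both halves of the proof; everything downstream is the elementary additivity of Poisson and of binomial-with-fixed-$p$ laws plus division by a constant that is measurable with respect to the conditioning $\sigma$-algebra.
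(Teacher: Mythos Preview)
Your argument is correct and is precisely the detailed unpacking of what the paper records as a one-line ``direct consequence of \textit{H1} and \textit{H2'}'': identify the conditional law of each summand, use independence across accident years to get conditional independence, and sum Poissons (resp.\ binomials with common $\delta_j$). One small remark: you invoke \textit{H1'} throughout, but the lemma is stated under \textit{H1} only, and indeed \textit{H1} together with \textit{H2'} suffices --- the row-$i$ part of $\mathcal{B}_{j-1}$ is already contained in $\mathcal{F}_{i+j-2}$ (and likewise $\mathcal{B}_j$ versus $\mathcal{F}_{i+j-1}$), so the extra within-row pair independence of \textit{H1'} is never actually used.
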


    \begin{proof}
    Direct consequence of \textit{H1} and \textit{H2'}.
    \end{proof}

This provides a direct method to simulate the bootstrapped $\lambda$'s and $\delta$'s.
\[
	\begin{aligned}
            (\widehat{\lambda}_{j}^m)_{1 \leq m \leq M}  &\overset{i.i.d.}{\sim} \frac{\mathcal{P}\left(\widehat{\lambda}_{j}\sum_{i=1}^{n-j+1}E_{i}\right)}{\sum_{i=1}^{n-j+1}E_{i}}, \ \ &1 \leq j \leq n, \\ && \\
            (\widehat{\delta}_{j}^m)_{1 \leq m \leq M}  &\overset{i.i.d.}{\sim} \frac{\mathcal{B}\left(\sum_{i=1}^{n-j}C_{i,j}, \widehat{\delta}_{j}\right)}{\sum_{i=1}^{n-j}C_{i,j}}, \ \ &1 \leq j \leq n-1,
        \end{aligned}    
\]
where $(\widehat{\lambda}_j)_{1 \leq j \leq n}$ and $(\widehat{\delta}_j)_{1 \leq j \leq n-1}$ come from $\eqref{main_estimators}$ and the $C$'s are the observed data.

\paragraph{Bootstrap simulation of $C_{n+1, n}$.} \ Following Proposition \ref{loi_poisson}, the bootstrap simulation is performed as follows:
    \[
        C_{n+1, n}^{m} \sim \mathcal{P}(\widehat{\lambda}_{n}'^{m}E_{n+1}), \ \ 1 \leq m \leq M.
    \]

\paragraph{Bootstrap simulation of $C_{i, j} \mid \mathcal{F}_{n}$.} \ For the lower triangle, the simulation is conducted using:
		\[
            \begin{aligned}
            C_{i,n-i+1}^{m} &:= C_{i,n-i+1}, \ \ &1 \leq i \leq n, \\
            C_{i,j+1}^{m} &:= C_{i,j}^{m} + \mathcal{P}\left(\widehat{\lambda}_{j+1}^{m}E_{i}\right)  - \mathcal{B}\left(C_{i,j}^{m}, \widehat{\delta}_{j}^{m}\right), \ \ &n-i+1 \leq j \leq n-1.
        \end{aligned}
		\]
On the right-hand side of the last equality, the difference of the two distributions should be interpreted as the difference of two independent random variables. This procedure generates a bootstrap distribution for the random variable $C_{i,j} \mid \mathcal{F}_n$ on the lower triangle. In this process, the uncertainty associated with the estimators of the parameters is integrated.

\begin{remarque}\label{bootstrap_cin}
Based on Proposition \ref{loi_fn}, when our focus is on $C_{i,n} \mid \mathcal{F}_{n}$, we can efficiently simulate $C_{i,n}^{m}$, using
        \[
            C_{i,n}^{m} \sim \mathcal{B}(C_{i,n-i+1}, \widehat{\delta}_{i,n}'^m) + \mathcal{P}\left(\widehat{\lambda}_{i,n}'^mE_{i}\right),
        \]
where the right-hand side should be interpreted as the sum of two independent random variables, and with
        \begin{align*}
            \widehat{\delta}_{i,n}'^m &:= \prod_{k = n-i+1}^{n-1}(1-\widehat{\delta}_{k}^{m}), \\
            \widehat{\lambda}_{i,n}'^{m} &:= \sum_{k = n-i+2}^{n}\widehat{\lambda}_{k}^{m}\left(\prod_{\ell = k}^{n-1}(1-\widehat{\delta}_{\ell}^{m})\right).
        \end{align*}
This provides a more efficient algorithm.
\end{remarque}

\subsection{The Negative binomial case}

We extend the Poisson model to fit the Negative binomial framework. To account for the variability in the parameters $r$'s, $\delta$'s and $p_1$, for $1 \leq m \leq M$. We have the following lemma, similar to Lemma \ref{bootstrap_param_poisson}.

\begin{lemme}
        Under H1 and H2", we have:
        \[
	\begin{aligned}
            \widehat{\lambda}_{j} \mid \mathcal{B}_{j-1} &\sim \frac{\mathcal{NB}\left(r_{j}\sum_{i=1}^{n-j+1}E_{i}, p_j\right)}{\sum_{i=1}^{n-j+1}E_{i}}, \ \ &1 \leq j \leq n, \\ && \\
            \widehat{\delta}_{j} \mid \mathcal{B}_{j} &\sim \frac{\mathcal{B}\left(\sum_{i=1}^{n-j}C_{i,j}, \delta_{j}\right)}{\sum_{i=1}^{n-j}C_{i,j}}, \ \ &1 \leq j \leq n-1,
        \end{aligned}
	\]
where $\mathcal{B}_{k} := \sigma(N_{i, j}, D_{i, j} \mid i+j \leq n+1, j \leq k)$.
\end{lemme}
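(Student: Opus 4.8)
The plan is to reproduce, with a single replacement, the reasoning that makes Lemma~\ref{bootstrap_param_poisson} a one-liner. The claim for $\widehat{\delta}_{j}$ is literally identical to the one in Lemma~\ref{bootstrap_param_poisson}, because \emph{H2"} keeps exactly the same conditional law for the $(D_{i,j+1})$ as \emph{H2'}; hence only the claim for $\widehat{\lambda}_{j}$ needs a fresh argument, and it reduces to the following assertion: conditionally on $\mathcal{B}_{j-1}$, the variables $(N_{i,j})_{1 \leq i \leq n-j+1}$ are independent with $N_{i,j} \sim \mathcal{NB}(r_{j}E_{i}, p_{j})$. Granting this, the result will follow from the stability of the Negative binomial family under convolution at a fixed second parameter.

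First I would fix $j$ and split $\mathcal{B}_{j-1}$ along occurrence years: writing $\mathcal{G}_{i} := \sigma\big(N_{i,j'}, D_{i,j'} : j' \leq j-1,\ i+j' \leq n+1\big)$ one has $\mathcal{B}_{j-1} = \bigvee_{i=1}^{n} \mathcal{G}_{i}$, and for $i \leq n-j+1$ the constraint $i+j' \leq n+1$ is automatic on $\{j' \leq j-1\}$, so that $\mathcal{G}_{i}$ is exactly the occurrence-year-$i$ part of $\mathcal{F}_{i+j-2}$. Since $N_{i,j}$ and $\mathcal{G}_{i}$ are both functions of the occurrence-year-$i$ family $(N_{i,\cdot}, D_{i,\cdot})$, assumption \emph{H1} makes these per-occurrence-year bundles mutually independent; a standard fact about conditioning on independent families then yields simultaneously the conditional independence of the $(N_{i,j})_{i}$ given $\mathcal{B}_{j-1}$ and the identity $\mathcal{L}(N_{i,j}\mid\mathcal{B}_{j-1}) = \mathcal{L}(N_{i,j}\mid\mathcal{G}_{i}) = \mathcal{L}(N_{i,j}\mid\mathcal{F}_{i+j-2})$, which by \emph{H2"} equals $\mathcal{NB}(r_{j}E_{i}, p_{j})$ and does not depend on the conditioning. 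The same bookkeeping, now with $\mathcal{B}_{j}$ and $\mathcal{F}_{i+j-1}$, shows that conditionally on $\mathcal{B}_{j}$ the $(D_{i,j+1})_{1 \leq i \leq n-j}$ are independent with $D_{i,j+1} \sim \mathcal{B}(C_{i,j}, \delta_{j})$, and that each $C_{i,j}$ (for $i \leq n-j$) is $\mathcal{B}_{j}$-measurable, being built by \eqref{XND} from occurrence-year-$i$ data of development year at most $j$.

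To finish, I would invoke the convolution identities. Conditionally on $\mathcal{B}_{j-1}$, $\sum_{i=1}^{n-j+1} N_{i,j}$ is a sum of independent Negative binomials sharing the second parameter $p_{j}$; since the probability generating function of $\mathcal{NB}(r,p)$ under \eqref{p_bn} is $s \mapsto \big(p/(1-(1-p)s)\big)^{r}$, that sum is $\mathcal{NB}\big(r_{j}\sum_{i=1}^{n-j+1}E_{i},\,p_{j}\big)$; dividing by the constant $\sum_{i=1}^{n-j+1}E_{i}$ gives the stated law of $\widehat{\lambda}_{j}\mid\mathcal{B}_{j-1}$. Likewise, conditionally on $\mathcal{B}_{j}$, $\sum_{i=1}^{n-j}D_{i,j+1}$ is a sum of independent binomials with common success probability $\delta_{j}$, hence $\mathcal{B}\big(\sum_{i=1}^{n-j}C_{i,j},\delta_{j}\big)$, and dividing by the $\mathcal{B}_{j}$-measurable quantity $\sum_{i=1}^{n-j}C_{i,j}$ yields the law of $\widehat{\delta}_{j}\mid\mathcal{B}_{j}$. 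The only point requiring genuine care is the identification, in the second paragraph, of the occurrence-year slices of $\mathcal{B}_{j-1}$ and $\mathcal{B}_{j}$ with those of $\mathcal{F}_{i+j-2}$ and $\mathcal{F}_{i+j-1}$; the convolution step is the elementary stability of $\mathcal{NB}$ at fixed $p$ (and of the binomial at fixed $p$), which poses no difficulty.
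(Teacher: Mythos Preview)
Your argument is correct and is exactly the natural unpacking of the paper's one-line proof (``Direct consequence of \textit{H1} and \textit{H2''}''): you make explicit the per-occurrence-year splitting, the identification of the relevant slices of $\mathcal{B}_{j-1}$ and $\mathcal{B}_{j}$ with those of $\mathcal{F}_{i+j-2}$ and $\mathcal{F}_{i+j-1}$, and the convolution stability of $\mathcal{NB}$ at fixed $p$ and of the binomial at fixed $\delta$. There is no meaningful difference in approach---only in the level of detail.
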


    \begin{proof}
    Direct consequence of \textit{H1} and \textit{H2"}.
    \end{proof}

However, unlike the Poisson case, we cannot apply the above lemma straightforwardly since it does not provide the distribution of $p_1$. Additionally, the estimator of $p_1$, defined in \eqref{eq_mle}, is non-trivially dependent on the $N_{i,j}$.

Nonetheless, for the $\delta's$, we can proceed as follows:
\[
	\begin{aligned}
            (\widehat{\delta}_{j}^m)_{1 \leq m \leq M}  &\overset{i.i.d.}{\sim} \frac{\mathcal{B}\left(\sum_{i=1}^{n-j}C_{i,j}, \widehat{\delta}_{j}\right)}{\sum_{i=1}^{n-j}C_{i,j}}, \quad &1 \leq j \leq n-1.
        \end{aligned}    
\]
For the $r$'s and $p_1$, we resimulate the upper triangle $N_{i,j}$.
	\[
        \begin{aligned}
            N_{i,j}^{m} &\sim \mathcal{NB}\left(\widehat{r}_{j}E_{i}, \widehat{p}_j\right), \quad &1 \leq i \leq n, \quad 1 \leq j \leq n-i.
\end{aligned} 
        \]

From these upper triangles, we can estimate the $r$'s and $p_1$. These are denoted respectively as $(\widehat{r}_{j}^{m})_{1 \leq j \leq n}$ and $\widehat{p}_1^m$.

\paragraph{Bootstrap simulation of $C_{n+1, n}$.} \ Following Proposition \ref{loi_nb}, the simulation is straightforward:
    \[
        C_{n+1, n}^{m} \sim \mathcal{NB}(\widehat{r}_{n}^{'m}E_{n+1}, \widehat{p}_n^m), \ \ 1 \leq m \leq M.
    \]

\paragraph{Bootstrap simulation of $C_{i, j} \mid \mathcal{F}_{n}$.} \ For the lower triangle, we simulate :
	\[
            \begin{aligned}
            C_{i,n-i+1}^{m} &:= C_{i,n-i+1}, \ \ &1 \leq i \leq n, \\
            C_{i,j+1}^{m} &:= C_{i,j}^{m} + \mathcal{NB}\left(\widehat{r}_{j+1}^{m}E_{i}, \widehat{p}_{j+1}^m\right)  - \mathcal{B}\left(C_{i,j}^{m}, \widehat{\delta}_{j}^{m}\right), \ \ &n-i+1 \leq j \leq n-1.
        \end{aligned}
	\]

On the right-hand side of the last equality, the difference of the two distributions should be interpreted as the difference of two independent random variables. Similarly to the approach discussed in Remark \ref{bootstrap_cin} for the Poisson case, if our focus is solely on the distribution $C_{i, n} \mid \mathcal{F}_{n}$, we can bypass simulating the entire lower triangle using Proposition \ref{loi_fn2}.

\section{Example}

We present two triangles of simulated data to illustrate both cases with $n = 6$ years of observations. For the first one, the exposure and the $C$ triangle are:
        
        \begin{table}[H]
        \begin{center}\footnotesize
        \begin{tabular}{|c|c|c c c c c c|}
  \hline
  $i$ & $E_{i}$ & $C_{i,1}$ & $C_{i,2}$ & $C_{i,3}$ & $C_{i,4}$ & $C_{i,5}$ & $C_{i,6}$ \\
  \hline
  1 & 20 & 5 & 9 & 11 & 12 & 13 & 11 \\
  2 & 25 & 11 & 16 & 13 & 11 & 17 &\\
  3 & 32 & 9 & 17 & 22 & 22 & & \\
  4 & 38 & 10 & 10 & 11 & & & \\
  5 & 42 & 17 & 18 & & & & \\
  6 & 45 & 14 & & & & & \\
  \hline
\end{tabular}
\end{center}
\end{table}
whose decomposition in $N$ and $D$ is:
        \begin{table}[H]
        \begin{center}\footnotesize
        \begin{tabular}{|c|c c c c c c|}
  \hline
  $i$ & $N_{i,1}$ & $N_{i,2}$ & $N_{i,3}$ & $N_{i,4}$ & $N_{i,5}$ & $N_{i,6}$ \\
  \hline
  1 & 5 & 4 & 5 & 2 & 1 & 0 \\
  2 & 11 & 9 & 4 & 4 & 6 &\\
  3 & 9 & 14 & 9 & 3 & & \\
  4 & 10 & 7 & 5 & & & \\
  5 & 17 & 10 & & & & \\
  6 & 14 & & & & & \\
  \hline
\end{tabular} \ \ \ \begin{tabular}{|c|c c c c c c|}
  \hline
  $i$ & $D_{i,1}$ & $D_{i,2}$ & $D_{i,3}$ & $D_{i,4}$ & $D_{i,5}$ & $D_{i,6}$ \\
  \hline
  1 & 0 & 0 & 3 & 1 & 0 & 2 \\
  2 & 0 & 4 & 7 & 6 & 0 & \\
  3 & 0 & 6 & 4 & 3 & & \\
  4 & 0 & 7 & 4 & & & \\
  5 & 0 & 9 & & & & \\
  6 & 0 & & & & & \\
  \hline
\end{tabular}
\end{center}
\end{table}
We can derive directly the $\lambda$'s and $\delta$'s.
      \begin{table}[H]
        \begin{center}\footnotesize
        \begin{tabular}{|c|c c c c c c|}
  \hline
    $j$ & 1 & 2 & 3 & 4 & 5 & 6\\
    \hline
    $\widehat{\lambda}_{j}$ & 0.327 & 0.28 & 0.2 & 0.117 & 0.156 & 0  \\
    \hline
    $\widehat{\delta}_{j}$ & 0.5 & 0.346 & 0.217 & 0 & 0.154 & \textendash \\
  \hline
\end{tabular}
\end{center}
\end{table}

Let $E_{n+1} = 50$ be the exposure for the upcoming year. Under the Poisson assumption, using the estimator of the intensity leads to:
\[
    C_{n+1,n} \sim \mathcal{P}\left(27.752\right)
\]
Under the Negative binomial assumption and utilizing the $\lambda$'s and the $\delta$'s, and computing $p_1$ by maximum likelihood leads to optimal $p_1 \rightarrow 1$. In this case, the Negative binomial distribution converges to the Poisson distribution: the assumption does not appear suitable.

\medbreak

To account for the uncertainty in the unknown parameters, we can use the bootstrap procedure. We get that the variance of $C_{n+1, n}$ is now around 53.361, which is notably higher than the variance obtained when using the Poisson distribution with the estimated parameter. Figure \ref{bootstrap_hist} illustrates the histogram of $C_{n+1, n}$ with the distribution $\mathcal{P}\left(27.752\right)$ (lighter on the left) compared to the distribution obtained from the bootstrap procedure (darker on the right).

\begin{figure}[H]
\centering
\includegraphics[scale=0.6]{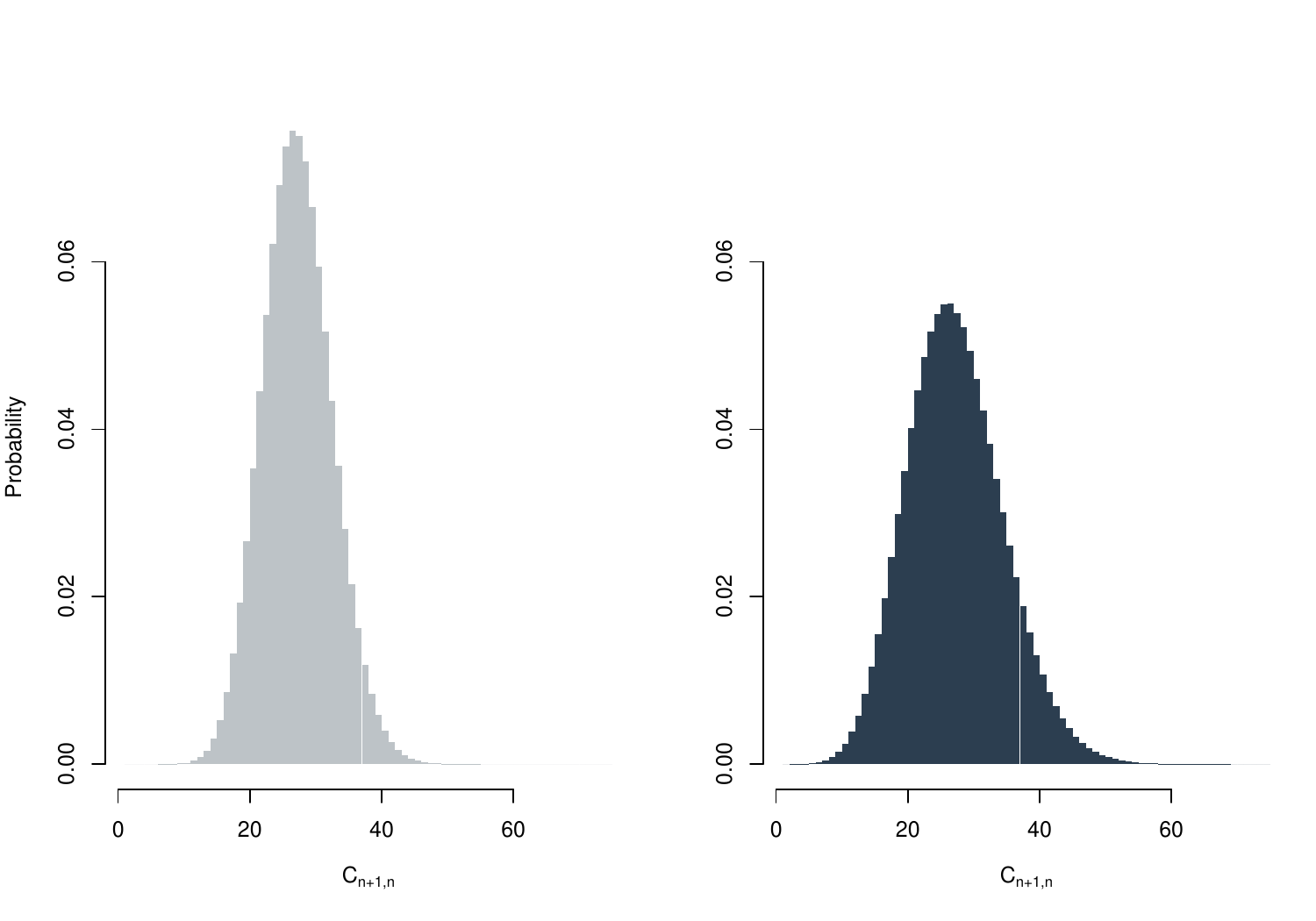}
\caption{Comparison of the distribution of the estimated distribution $\mathcal{P}\left(27.752\right)$ (on the left), and the associated bootstrap distribution (on the right); with $M = 10^7$ simulations.}\label{bootstrap_hist}
\end{figure}

\bigskip

We now introduce a second set of triangles, the exposure and the $C$ triangle are:
        
        \begin{table}[H]
        \begin{center}\footnotesize
        \begin{tabular}{|c|c|c c c c c c|}
  \hline
  $i$ & $E_{i}$ & $C_{i,1}$ & $C_{i,2}$ & $C_{i,3}$ & $C_{i,4}$ & $C_{i,5}$ & $C_{i,6}$ \\
  \hline
  1 & 20 & 8 & 4 & 12 & 12 & 14 & 13 \\
  2 & 25 & 3 & 5 & 7 & 10 & 15 &\\
  3 & 32 & 5 & 10 & 11 & 9 & & \\
  4 & 38 & 27 & 20 & 29 & & & \\
  5 & 42 & 23 & 18 & & & & \\
  6 & 45 & 14 & & & & & \\
  \hline
\end{tabular}
\end{center}
\end{table}

whose decomposition in $N$ and $D$ is:
        \begin{table}[H]
        \begin{center}\footnotesize
        \begin{tabular}{|c|c c c c c c|}
  \hline
  $i$ & $N_{i,1}$ & $N_{i,2}$ & $N_{i,3}$ & $N_{i,4}$ & $N_{i,5}$ & $N_{i,6}$ \\
  \hline
  1 & 8 & 3 & 9 & 4 & 3 & 0 \\
  2 & 3 & 5 & 4 & 3 & 6 &\\
  3 & 5 & 7 & 3 & 3 & & \\
  4 & 27 & 8 & 13 & & & \\
  5 & 23 & 7 & & & & \\
  6 & 14 & & & & & \\
  \hline
\end{tabular} \ \ \ \begin{tabular}{|c|c c c c c c|}
  \hline
  $i$ & $D_{i,1}$ & $D_{i,2}$ & $D_{i,3}$ & $D_{i,4}$ & $D_{i,5}$ & $D_{i,6}$ \\
  \hline
  1 & 0 & 7 & 1 & 4 & 1 & 1 \\
  2 & 0 & 3 & 2 & 0 & 1 & \\
  3 & 0 & 2 & 2 & 5 & & \\
  4 & 0 & 15 & 4 & & & \\
  5 & 0 & 12 & & & & \\
  6 & 0 & & & & & \\
  \hline
\end{tabular}
\end{center}
\end{table}

\FloatBarrier

Again, we can derive directly the $\lambda$'s and $\delta$'s.

      \begin{table}[H]
        \begin{center}\footnotesize
        \begin{tabular}{|c|c c c c c c|}
  \hline
    $j$ & 1 & 2 & 3 & 4 & 5 & 6\\
    \hline
    $\widehat{\lambda}_{j}$ & 0.396 & 0.191 & 0.252 & 0.13 & 0.2 & 0  \\
    \hline
    $\widehat{\delta}_{j}$ & 0.591 & 0.231 & 0.3 & 0.091 & 0.071 & \textendash \\
  \hline
\end{tabular}
\end{center}
\end{table}

Given an exposure of $E_{n+1} = 50$ for the next year and assuming a Poisson distribution, we get that
\[
    C_{n+1,n} \sim \mathcal{P}\left(30.243\right)
\]
Computing $p_1$ by maximum likelihood does not lead to $p_1 \rightarrow 1$ anymore. We find $\widehat{p}_1 = 0.397$. This implies that

      \begin{table}[H]
        \begin{center}\footnotesize
        \begin{tabular}{|c|c c c c c c|}
  \hline
    $j$ & 1 & 2 & 3 & 4 & 5 & 6\\
    \hline
    $\widehat{p}_{j}$ & 0.397 & 0.616 & 0.676 & 0.749 & 0.767 & 0.78  \\
    \hline
    $\widehat{r}_{j}$ & 0.263 & 0.402 & 0.418 & 0.448 & 0.328 & 0.177 \\
  \hline
\end{tabular}
\end{center}
\end{table}

In particular, under the Negative binomial assumption,
\[
    C_{n+1,n} \sim \mathcal{NB}\left(106.94, 0.780\right).
\]
By construction, the expected value of $C_{n+1,n}$ remains at 30.243, but the variance increases to 38.796.

To choose between the two assumptions, we can calculate the log-likelihood and AIC for both cases. Table \ref{aic} presents the results.

      \begin{table}[H]
        \begin{center}\footnotesize
        \begin{tabular}{|c|c c|}
  \hline
     & log-L. & AIC\\
    \hline
    $\mathcal{P}$ & -53.937 & 119.875  \\
    \hline
    $\mathcal{NB}$ & -50.793 & 115.586  \\
  \hline
\end{tabular}
\end{center}
\caption{Log-likelihood and AIC for both assumptions.}
\label{aic}
\end{table}

It appears that using the Negative binomial distribution is the most suitable choice in this scenario. Given that the $\delta$'s are also part of the definition of the $p$'s, one might question whether it would be beneficial to estimate both $(p_{1}, (\delta_{j})_{1 \leq j \leq n-1})$ simultaneously using data from both $N$ and $D$. Let $(\widetilde{p}_1, (\widetilde{\delta}_{j})_{1 \leq j \leq n-1})$ denote the new estimators. Table \ref{pd_comparison} provides a comparison, which shows that the difference is minimal.

      \begin{table}[H]
        \begin{center}\footnotesize
        \begin{tabular}{|c|c c c c c|}
  \hline
    $j$ & 1 & 2 & 3 & 4 & 5\\
    \hline
    $\widehat{\delta}_{j}$ & 0.591 & 0.231 & 0.3 & 0.091 & 0.071  \\
    \hline
    $\widetilde{\delta}_{j}$ & 0.601 & 0.232 & 0.305 & 0.092 & 0.071 \\
  \hline
\end{tabular} \\ \bigskip
        \begin{tabular}{|c|c c c c c c|}
  \hline
    $j$ & 1 & 2 & 3 & 4 & 5 & 6\\
    \hline
    $\widehat{p}_{j}$ & 0.397 & 0.616 & 0.676 & 0.749 & 0.767 & 0.78  \\
    \hline
    $\widetilde{p}_{j}$ & 0.393 & 0.619 & 0.679 & 0.753 & 0.77 & 0.783  \\
  \hline
\end{tabular}
\end{center}
\caption{Comparison of the two estimation methods.}
\label{pd_comparison}
\end{table}

Figure \ref{graphboth} shows the distribution from the Poisson assumption (lighter on the left) compared with the distribution from the negative binomial assumption (darker on the right).

\begin{figure}[H]
\centering
\includegraphics[scale=0.6]{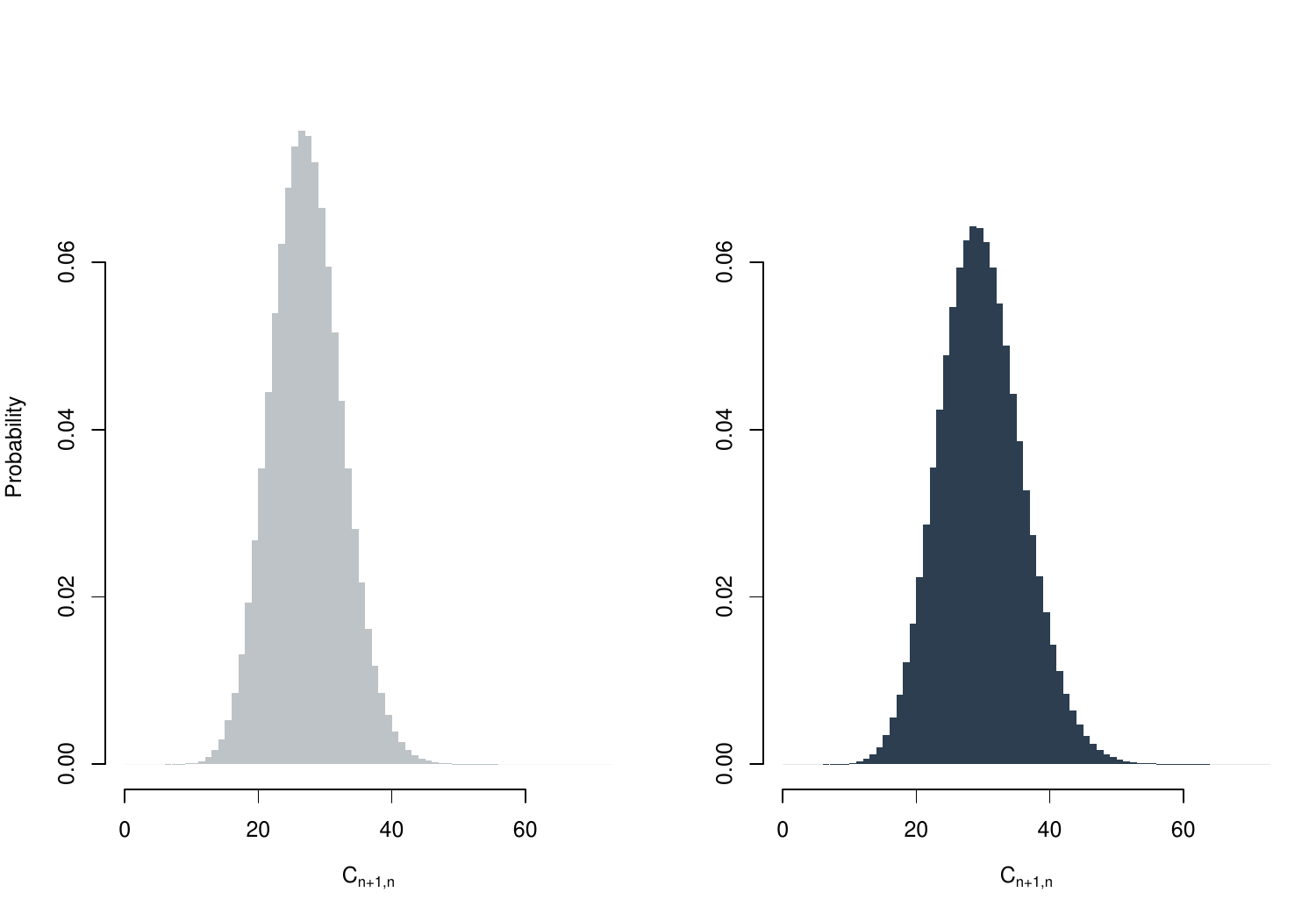}
\caption{Comparison of the distribution of the estimated distribution $\mathcal{P}\left(30.243\right)$ (on the left), and the distribution of the estimated distribution $\mathcal{NB}\left(106.939, 0.780\right)$ (on the right).}\label{graphboth}
\end{figure}

Figure \ref{graphboth2} illustrates the bootstrap distribution from the Poisson assumption (lighter on the left) and from the Binomial negative assumption (darker on the right). In each simulation using the negative binomial approach, if the estimated $\widehat{p}_{1}$ was close to 1, suggesting that the Poisson distribution was a better fit, the simulation was conducted using the Poisson framework. The bootstrap results show that the variance of the distribution under the Poisson assumption is 62.633, while the variance under the negative binomial assumption is 67.658.

\begin{figure}[H]
\centering
\includegraphics[scale=0.6]{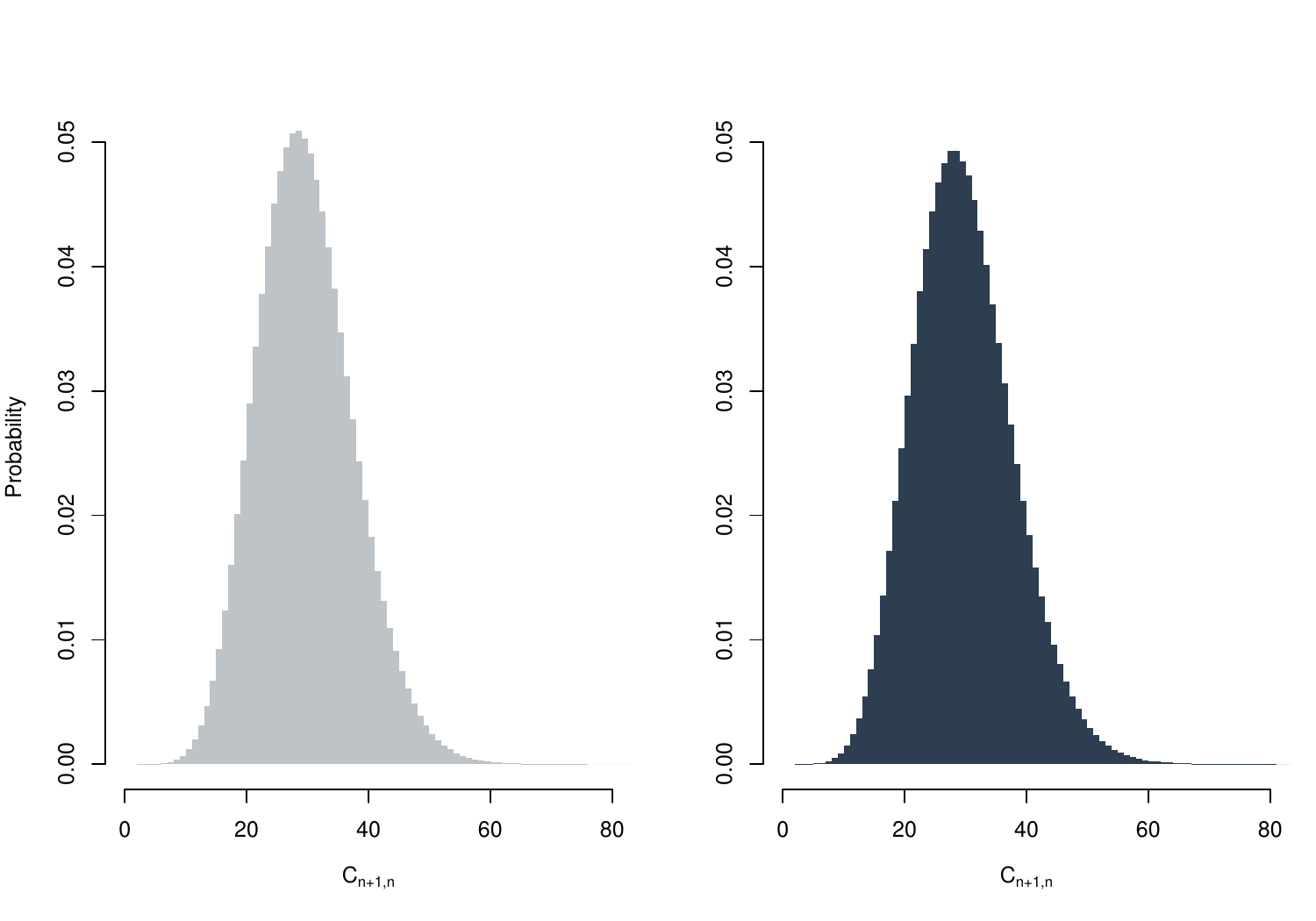}
\caption{Comparison of the Bootstrap distribution obtained with the Poisson assumption (on the left), and with the Negative binomial assumption (on the right); with $M = 10^7$ simulations.}\label{graphboth2}
\end{figure}

\FloatBarrier

\section*{Acknowledgments}

The author acknowledges the financial support provided by the \emph{Fondation Natixis}.

\bibliographystyle{plain}
\bibliography{bibliographie}

\begin{thebibliography}{1}

\bibitem{england2006predictive}
P.~D. England and R.~J. Verrall.
\newblock Predictive distributions of outstanding liabilities in general
  insurance.
\newblock {\em Annals of Actuarial Science}, 1(2):221–270, 2006.

\bibitem{liu2009bootstrap}
Huijuan Liu and Richard Verrall.
\newblock A bootstrap estimate of the predictive distribution of outstanding
  claims for the schnieper model.
\newblock {\em ASTIN Bulletin}, 39(2):677–689, 2009.

\bibitem{liu2009predictive}
Huijuan Liu and Richard Verrall.
\newblock Predictive distributions for reserves which separate true ibnr and
  ibner claims.
\newblock {\em ASTIN Bulletin}, 39(1):35–60, 2009.

\bibitem{mack1993distribution}
Thomas Mack.
\newblock Distribution-free calculation of the standard error of chain ladder
  reserve estimates.
\newblock {\em ASTIN Bulletin}, 23(2):213–225, 1993.

\bibitem{ohlssonclaims}
Esbj{\"o}rn Ohlsson.
\newblock Using separate exposure for ibnyr and ibner in the chain ladder
  method, 2015.

\bibitem{ohlssonclaims2}
Esbj{\"o}rn Ohlsson and Bj{\"o}rn Wallberg-Beutelrock.
\newblock Claims reserving using separate exposure for claims with and without
  a case reserve, 2022.

\bibitem{schnieper1991separating}
R.~Schnieper.
\newblock Separating true ibnr and ibner claims.
\newblock {\em ASTIN Bulletin}, 21(1):111–127, 1991.

\end{thebibliography}

\end{document}